\def\BibTeX{{\rm B\kern-.05em{\sc i\kern-.025em b}\kern-.08em
    T\kern-.1667em\lower.7ex\hbox{E}\kern-.125emX}}
\newtheorem{lemma}{Lemma}
\newtheorem{theorem}{Theorem}
\newcommand{\xp}{x_P}
\newcommand{\yp}{y_P}
\newcommand{\xe}{x_E}
\newcommand{\ye}{y_E}
\newcommand{\vpx}{v_{Px}}
\newcommand{\vpy}{v_{Py}}
\newcommand{\ap}{a_P}
\newcommand{\ve}{v_E}
\newcommand{\ttp}{\theta_P}
\newcommand{\tte}{\theta_E}
\newcommand{\vpb}{\bar v_P}
\newcommand{\vp}{v_P}
\newcommand{\ld}{\lambda}
\newcommand{\gm}{\gamma}
\newcommand{\apb}{\bar a_P}
\newcommand{\veb}{\bar v_E}
\newcommand{\cc}{\mathcal{C}}
\newcommand{\bc}{\mathbf{c}}
\newcommand{\bx}{\mathbf{x}}
\newcommand{\tpt}{t_\theta}
\newcommand{\px}{p_x}
\newcommand{\py}{p_y}
\newcommand{\qx}{q_x}
\newcommand{\qy}{q_y}
\newcommand{\lra}{\Leftrightarrow}
\newcommand{\ep}{\epsilon}
\begin{document}
\title{Pursuit-Evasion Between a Velocity-Constrained Double-Integrator Pursuer and a Single-Integrator Evader}
\author{Zehua Zhao, Rui Yan, Jianping He, Xinping Guan, Xiaoming Duan
\thanks{Zehua Zhao, Jianping He, Xinping Guan and Xiaoming Duan are with the State Key Laboratory of Submarine Geoscience, School of Automation and Intelligent Sensing, Shanghai Jiao Tong University, Shanghai 200240, China (e-mails: \{zehua.zhao, jphe, xpguan, xduan\}@sjtu.edu.cn).}
\thanks{Rui Yan is with the School of Artificial Intelligence, Beihang University, Beijing 100191, China (e-mail: rui\_yan@buaa.edu.cn).}
}

\maketitle
\begin{abstract}
We study a pursuit-evasion game  between a double integrator-driven pursuer with bounded velocity and bounded acceleration and a single integrator-driven evader with bounded velocity in a two-dimensional plane. The pursuer's goal is to capture the evader in the shortest time, while the evader attempts to delay the capture. We analyze two scenarios based on whether the capture can happen before the pursuer's speed reaches its maximum. For the case when the pursuer can capture the evader before its speed reaches its maximum, we use geometric methods to obtain the strategies for the pursuer and the evader. For the case when the pursuer cannot capture the evader before its speed reaches its maximum, we use numerical methods to obtain the strategies for the pursuer and the evader. In both cases, we demonstrate that the proposed strategies are optimal in the sense of Nash equilibrium through the Hamilton–Jacobi–Isaacs equation, and the pursuer can capture the evader as long as as its maximum speed is larger than that of the evader. Simulation experiments illustrate the effectiveness of the strategies.
\end{abstract}

\begin{IEEEkeywords}
double-integrator, Hamilton-Jacobi-Isaacs equation, optimal strategies, pursuit-evasion games, velocity constraints.
\end{IEEEkeywords}

\section{Introduction}
With the rapid advancement of autonomy and robotics, pursuit-evasion (PE) games have emerged as an important application for multiagent systems. In such games, pursuers aim to capture evaders as efficiently as possible, while evaders strive to avoid or delay capture. These scenarios are commonly found in natural ecosystems—such as the interaction between predators and prey, and group behaviors \cite{TF-17, YJ-20}—as well as in military applications, including drone tracking, missile interception, and artillery defense \cite{NS-21, YC-18, WI-20}.

The theoretical foundation of PE games traces back to Isaacs’ seminal work in the 1960s, which frames adversarial interactions as differential games and laid the groundwork for modern analysis \cite{IR-99}. Over decades, PE games have evolved into a rich interdisciplinary field, bridging control theory, optimization, and artificial intelligence. Nowadays, based on different objectives, pursuit-evasion games have branched into various problems, such as reach-avoid games \cite{yan2018reach}, perimeter defense problems \cite{shishika2020cooperative}, defense games in a region \cite{yan2017defense}, etc.

In Isaacs' study, to solve differential game problems, it is necessary to solve the Hamilton–Jacobi–Isaacs (HJI) equation, which is a partial differential equation. However, solving the HJI equation is extremely challenging in complex problems. In subsequent research, various methods have been explored to address differential games and pursuit-evasion problems, such as Pontryagin's maximum principle \cite{pontryagin1966theory} and others. Recently, geometric methods have been employed to solve PE games due to their intuitiveness and simplicity \cite{yan2017escape, garcia2020multiple, deng2023multiple, fu2020guarding, ramana2017pursuit, yan2024pursuit}. The approach begins by determining the barrier of the game, which divides the entire game space into different regions based on the advantages of both players. Subsequently, the strategies for both players are derived from this division, and the optimality is verified using the HJI equation \cite{yan2022matching, lee2022two, garcia2017geometric, garcia2018design, G-20}. While solving the HJI equation is challenging, verifying whether the value function satisfies the HJI equation is much easier. This has become a commonly used method for solving PE games.

Despite the various breakthroughs in the previous studies on PE games, such as extending the 2D space to 3D \cite{yan2019construction}, adding a capture radius for the pursuer \cite{yan2022matching}, and extending the one-on-one pursuit-evasion problem to a multi-agent scenario \cite{yan2019task}, the players considered in these problems are mostly driven by single integrators. However, in practical applications, players are often unable to suddenly change both the magnitude and direction of their velocity as in the case of single integrators. To fill this gap, some studies focus on the Dubins model~\cite{yan2024multiplayer, zheng2021time, CZ-19}, but the model is difficult to analyze  due to its nonlinear characteristics. As a result, the problem is often simplified and converted into an optimal control problem by fixing forward speed or choosing stationary targets, which further limits its practical application. 

Another approach is to replace the single integrator-driven players with double integrator ones so that the players' acceleration and turning become smoother, avoiding sudden sharp turns or abrupt acceleration and deceleration. However, due to the geometric complexity of the double integrator model, related research is limited. In \cite{CM-17}, Coon et al. propose a technique for solving pursuit-evasion problems involving double-integrator players using geometric methods: Isochrones. Isochrones are defined as the set of points a player can reach within a certain time under a specific strategy. With the concept of Isochrones, the originally complex geometric properties of pursuit-evasion problems involving double-integrator players are simplified. In \cite{LS-251, LS-252, LS-253}, Li et al. analyze pursuit-evasion problems for three different cases: when the pursuer is a double-integrator, when the evader is a double-integrator, and when both players are double-integrators. They provide the strategies for both players under different initial conditions and ultimately prove the optimality of these strategies using the HJI equation. Although the double-integrator model better aligns with the dynamics of real robots and vehicles, the speed of the player must not increase infinitely. Therefore, limitations need to be applied to ensure that the player's velocity does not become unbounded. One approach is to introduce damping to the acceleration~\cite{LS-251, LS-252, LS-253}, which causes the player's speed to gradually stabilize instead of growing indefinitely. In \cite{lyu2025reach}, Lyu et al. presents a comprehensive study on  this model and adopts it in reach-avoid games. Another method is to impose a hard constraint on the player's velocity, similar to real robots and vehicles that have a rated maximum speed or output saturation, thus ensuring that the player's speed does not exceed a certain threshold. However, imposing a hard constraint on the player's velocity causes the geometric advantages brought by Isochrones to vanish. One can impose additional constraints on the control variables, such that when the velocity approaches the boundary of the constraint, the control variable rapidly increases in the opposite direction, forcing the velocity back into the constrained region~\cite{GW-21}. Or one can use Bang-Off-Bang control, which, according to Pontryagin's Maximum Principle, forces the velocity to reach the constraint boundary by applying the maximum control value, and then sets the control variable to zero, maintaining the velocity at the maximum value \cite{FM-25, JP-05}. However, the problems discussed in \cite{GW-21, FM-25, JP-05} are all one-dimensional, and to our knowledge, there are no articles that apply such a velocity hard-constraint formulation to the pursuit-evasion problem in two-dimensional space. Therefore, finding optimal strategies for a double-integrator pursuit-evasion game with a hard velocity constraint remains an open problem.

In this work, we study the pursuit-evasion game problem in a two-dimensional plane between a double-integrator pursuer~($P$) and a single-integrator evader ($E$). The control input for $P$ consists of the magnitude and direction of acceleration, with constraints on  the maximum acceleration and speed; the control input for $E$ is the magnitude and the direction of speed, also with a constraint on the maximum speed. What's more, $P$ has a hard constraint on its velocity to ensure its speed does not exceed a certain threshold. $P$'s objective is to capture $E$ as quickly as possible, while $E$'s goal is to delay the capture as much as possible. Since $P$'s speed is subject to a hard constraint, our paper develops the optimal strategies under two cases. First, when $P$ can capture $E$ before reaching its maximum speed, there is no speed constraint on $P$, reducing the pursuit-evasion problem to a typical game between a double-integrator $P$ and a single-integrator $E$. Although optimal strategies under various initial conditions have been extensively studied in \cite{LS-251}, the models in these studies involved damping, which can be arbitrarily small but not zero. Therefore, this part of the article complements~\cite{LS-251}, providing a strategy for a model with zero damping and verifying its optimality in the sense of Nash equilibrium using the HJI equation. Second, when $P$ cannot capture $E$ before reaching its maximum speed, Isochrones no longer apply. In this case, the article introduces a simple numerical method to solve for the strategies and uses the HJI equation to verify its optimality in the sense of Nash equilibrium. Our major contributions are as follows. 

\begin{enumerate}
    \item We formulate a PE game involving a double-integrator $P$ with a hard speed constraint and a single-integrator $E$, and we divide the problem into two separate cases: one where $P$ has not yet reached its maximum speed when capture occurs and one where it has. 
    \item In the case when $P$ can capture $E$ before $P$ reaches its maximum speed, we derive the analytical strategies for the PE game using geometric methods. 
    \item In the case when $P$ cannot capture $E$ before $P$ reaches its maximum speed, we propose a novel and feasible numerical method to solve for the strategies.
    \item We verify the optimality of the proposed strategies in the sense of Nash equilibrium using the HJI equation.
\end{enumerate}

The rest of this article is organized as follows. Section \ref{PF} presents the problem fomulation and the HJI equation required for differential games. Section \ref{OS} provides the corresponding strategies for two cases: when $P$ captures $E$ before reaching its maximum speed, and when it does not. The optimality of both strategies in the sense of Nash equilibrium is verified using the HJI equation. We also outline the complete algorithm for computing the optimal strategies. Section \ref{AS} presents the simulation results. Finally, Section \ref{CC} concludes the article.

\section{Problem Formulation}\label{PF}
We consider a pursuer $P$ driven by a double integrator and an evader $E$ driven by a single integrator on a 2D plane, and their dynamics are given by
\begin{equation}
    P:\begin{cases}
        \dot x_P=\vpx,\\\dot y_P=\vpy,\\\dot v_{Px}=\ap\cos\ttp,\\\dot v_{Py}=\ap\sin\ttp,
    \end{cases}
    \quad E:\begin{cases}
        \dot x_E=\ve\cos\tte,\\\dot y_E=\ve\sin\tte,
    \end{cases}
\label{dyn}
\end{equation}\\
where $(\xp,\yp)$ and $(\xe,\ye)$ are the positions of $P$ and $E$, and $(\vpx,\vpy)$ is the velocity of $P$, and $(\xp^0,\yp^0)=(\xp(0),\yp(0))$ and $(\xe^0,\ye^0)=(\xe(0),\ye(0))$ are the initial positions of $P$ and $E$, and $(\vpx^0,\vpy^0)=(\vpx(0),\vpy(0))$ is the initial velocity of $P$. We denote the system state by $\mathbf x=(\mathbf x_P^\top,\mathbf x_E^\top)^\top=(\xp,\yp,\vpx,\vpy,\xe,\ye)^\top$, where $\mathbf x_P=(\xp,\yp,\vpx,\vpy)^\top$ and $\mathbf x_E=(\xe,\ye)^\top$ are states of $P$ and $E$, respectively, and the initial state by $\mathbf x^0=({\mathbf x_P^0}^\top,{\mathbf x_E^0}^\top)^\top=(\xp^0,\yp^0,\vpx^0,\vpy^0,\xe^0,\ye^0)^\top$. The control inputs are the magnitude $\ap$ and the direction $\ttp$ of $P$'s acceleration and the magnitude $\ve$ and the direction $\tte$ of $E$'s velocity. The magnitudes of $P$'s acceleration and $E$'s velocity are assumed to be bounded, i.e., $\ap\in[0,\apb]$, $\ve\in[0,\veb]$. Moreover, to ensure that the speed of $P$ will not increase indefinitely, the magnitude of $P$'s velocity  is also bounded, i.e., $\vp=\sqrt{\vpx^2+\vpy^2}\in[0, \vpb]$. The capture occurs when the positions of $P$ and $E$ coincide, i.e., $\xp=\xe$ and $\yp=\ye$. On the other hand, we also assume the maximum speed of $P$ is bigger than that of $E$, i.e., $\vpb>\veb$, which ensures that the capture can occur (see Lemma \ref{lm5} for details).

In the PE game, $P$ aims to capture $E$ as soon as possible, while $E$ wants to delay the capture, and we define the cost function of the game as
\begin{equation}
    J=t_f=\int_0^{t_f}dt,
\label{cofu}
\end{equation}
where $t_f$ is the capture time. The terminal set is defined by $\{\mathbf x=(\mathbf x_P^\top,\mathbf x_E^\top)^\top\,|\,\Psi(\mathbf x)=0\}$, where
\begin{equation}
    \Psi(\mathbf x)=(\xp-\xe)^2+(\yp-\ye)^2.
\label{psi}
\end{equation}

Since $P$ and $E$ aim to find the optimal strategies to minimize or maximize the cost function in the game, the optimal strategies $\ap^*$, $\ttp^*$, $\ve^*$, $\tte^*$ must satisfy
\begin{equation*}
\begin{aligned}
    J(\mathbf x,\ap^*,\ttp^*,\ve^*,\tte^*)&=\min_{\ap,\ttp}\max_{\ve,\tte}J(\mathbf x,\ap,\ttp,\ve,\tte)\\&=\max_{\ve,\tte}\min_{\ap,\ttp}J(\mathbf x,\ap,\ttp,\ve,\tte).
\end{aligned}
\end{equation*}
This implies that under the optimal strategies, neither $P$ nor $E$ can achieve a better outcome in the game by unilaterally changing their own strategy,  i.e.,
\begin{equation*}
    \begin{aligned}
        J(\mathbf x,\ap^*,\ttp^*,\ve^*,\tte^*)\geq J(\mathbf x,\ap^*,\ttp^*,\ve,\tte),\\J(\mathbf x,\ap^*,\ttp^*,\ve^*,\tte^*)\leq J(\mathbf x,\ap,\ttp,\ve^*,\tte^*),
    \end{aligned}
\end{equation*}
hold for any $\ap$, $\ttp$, $\ve$, and $\tte$. Moreover, the value function of this PE game is given by
\begin{equation}
    V=\min_{\ap,\ttp}\max_{\ve,\tte}J=\max_{\ve,\tte}\min_{\ap,\ttp}J.
\label{valuef}
\end{equation}
According to \cite{IR-99}, the strategies of the PE games are optimal in the sense of Nash equilibrium if and only if the value function $V$ satisfies the following HJI equation
\begin{multline}
        \frac{\partial V}{\partial\xp}\vpx+\frac{\partial V}{\partial\yp}\vpy+\frac{\partial V}{\partial\xe}\ve^*\cos\tte^*+\frac{\partial V}{\partial\ye}\ve^*\sin\tte^*\\+\frac{\partial V}{\partial\vpx}\ap^*\cos\ttp^*+\frac{\partial V}{\partial\vpy}\ap^*\sin\ttp^*+1=0,
\label{hji}
\end{multline}
where $\ap^*$, $\ttp^*$, $\ve^*$, $\tte^*$ are the optimal strategies of $P$ and $E$.

\section{Optimal Strategies}\label{OS}
In this section, we will present strategies for $P$ and $E$ under different initial conditions in Subsection~\ref{subsection:beforemax} and~\ref{subsection:aftermax}. Then, we will provide the algorithm for computing these strategies in Subsection~\ref{Alg}. Finally, we will prove the optimality of these strategies in the sense of Nash equilibrium using the HJI equation~\eqref{hji} in Subsection \ref{subsection:optimal}.

Unlike games where both $P$ and $E$ are driven by single integrators, in our game, $P$ is driven by a double integrator, and simple geometric methods cannot be applied to obtain the strategies. Additionally, a hard constraint is imposed on $P$'s motion by setting an upper bound on its velocity to prevent its speed from increasing indefinitely, and the strategies for $P$ and $E$ depend on whether $P$ can capture $E$ before reaching its maximum speed. In the following, we analyze two cases.
\subsection{Strategies when the pursuer can capture the evader before reaching the maximum speed}\label{subsection:beforemax}
We first study the case when $P$ can capture $E$ before reaching the maximum speed. In this case, the hard constraint on the motion of $P$ is inactive, and we can obtain the following lemma using the Hamiltonian.
\begin{lemma}[Necessary conditions for optimal strategies  when the pursuer can capture the evader before reaching its maximum speed]
    If $P$ can capture $E$ before $P$'s speed reaches the maximum, i.e.,  $\vp(t_f)<\vpb$, then the optimal strategy for $P$ is to accelerate along a fixed direction and maintain the maximum acceleration, i.e., $\ap^*=\apb$ and $\ttp^*$ is constant, while the optimal strategy for $E$ is to move with the maximum speed in a fixed direction, i.e., $\ve^*=\veb$ and $\tte^*$ is constant.
\label{lm1}
\end{lemma}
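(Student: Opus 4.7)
My plan is to apply Pontryagin's Minimum/Maximum Principle (in its Isaacs saddle-point form) to the game Hamiltonian. Under the hypothesis $\vp(t_f)<\vpb$, the hard speed constraint on $P$ is inactive on $[0,t_f]$, so the unconstrained form of the principle applies on both sides. I would first write
\[
H = 1 + \lambda_{\xp}\vpx + \lambda_{\yp}\vpy + \lambda_{\vpx}\ap\cos\ttp + \lambda_{\vpy}\ap\sin\ttp + \lambda_{\xe}\ve\cos\tte + \lambda_{\ye}\ve\sin\tte,
\]
and then read off necessary conditions on the costates together with the pointwise min/max conditions on the controls.

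\textbf{Costates and transversality.} Because $\xp,\yp,\xe,\ye$ do not appear explicitly in $H$, the adjoint equations $\dot\lambda=-\partial H/\partial\bx$ force $\lambda_{\xp},\lambda_{\yp},\lambda_{\xe},\lambda_{\ye}$ to be constant, while $\dot\lambda_{\vpx}=-\lambda_{\xp}$ and $\dot\lambda_{\vpy}=-\lambda_{\yp}$ make the velocity adjoints affine in $t$. To extract the boundary values I would reformulate the terminal manifold $\Psi=0$ (whose gradient vanishes at capture, so the textbook transversality formula is degenerate) via the two independent constraints $\xp-\xe=0$ and $\yp-\ye=0$. The resulting transversality conditions give $\lambda_{\vpx}(t_f)=\lambda_{\vpy}(t_f)=0$ and $\lambda_{\xp}=-\lambda_{\xe}=:\mu_1$, $\lambda_{\yp}=-\lambda_{\ye}=:\mu_2$, with $(\mu_1,\mu_2)\neq(0,0)$ by the non-triviality clause of the principle. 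Integrating the adjoint dynamics then yields
\[
\lambda_{\vpx}(t)=\mu_1(t_f-t),\qquad \lambda_{\vpy}(t)=\mu_2(t_f-t).
\]

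\textbf{Pointwise optimization over the controls.} The Hamiltonian decouples into terms involving only $P$'s controls and terms involving only $E$'s. Minimizing $\ap(\lambda_{\vpx}\cos\ttp+\lambda_{\vpy}\sin\ttp)$ over $\ttp$ sends the unit direction of $P$ to $-(\lambda_{\vpx},\lambda_{\vpy})/\sqrt{\lambda_{\vpx}^2+\lambda_{\vpy}^2}=-(\mu_1,\mu_2)/\sqrt{\mu_1^2+\mu_2^2}$, which is \emph{constant in} $t$ since the common factor $(t_f-t)>0$ cancels; the resulting coefficient of $\ap$ is strictly negative for $t<t_f$, so the minimizer is $\ap^*=\apb$. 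Symmetrically, maximizing the $E$-terms aligns $(\cos\tte^*,\sin\tte^*)$ with $(\lambda_{\xe},\lambda_{\ye})=-(\mu_1,\mu_2)$ (again constant), and gives $\ve^*=\veb$. This is exactly the claimed form of the optimal strategies.

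\textbf{Main obstacle.} The principal technical hurdle is the transversality step: since $\nabla\Psi$ vanishes at the capture point, the standard formula $\lambda(t_f)=\mu\nabla\Psi(\bx(t_f))$ is vacuous and one must re-parameterize the terminal manifold by the two scalar equations above and keep track of the non-triviality condition $(\mu_1,\mu_2)\neq(0,0)$. A secondary point, which I would flag explicitly before invoking the unconstrained principle, is to verify that the hard speed bound on $P$ is inactive on the entire interval $[0,t_f]$ and not merely at the endpoint; under the Lemma's hypothesis this is part of the case assumption, but it deserves a sentence of justification so as not to circularly invoke the structure of the strategy it is being used to derive.
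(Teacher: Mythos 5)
Your proposal is correct and follows essentially the same route as the paper: both apply Pontryagin's principle to the game Hamiltonian, deduce that the position costates are constant and the velocity costates are affine with zero terminal value, and then obtain constant headings and saturated controls from the pointwise min/max conditions. The only difference is a refinement of the same argument: you reparameterize the terminal manifold by $\xp-\xe=0$, $\yp-\ye=0$ to avoid the degenerate gradient of $\Psi$ at capture, whereas the paper invokes the transversality condition $\ld_3(t_f)=\mu_1\partial\Psi/\partial\vpx=0$, $\ld_4(t_f)=\mu_2\partial\Psi/\partial\vpy=0$ directly, which suffices because $\Psi$ is independent of the velocity variables.
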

\begin{proof}
    The Hamiltonian of \eqref{dyn} is
    \begin{equation}
        \begin{aligned}
            H=&\ld_1\vpx+\ld_2\vpy+\ld_3\ap\cos\ttp+\ld_4\ap\sin\ttp\\&+\gm_1\ve\cos\tte+\gm_2\ve\sin\tte+1,
        \end{aligned}
    \label{hmt}
    \end{equation}
    where $\ld_1$, $\ld_2$, $\ld_3$, $\ld_4$, $\gm_1$ and $\gm_2$ are costates. According to the Pontryagin Maximum Principle, we have
    \begin{equation*}
        \begin{aligned}
            \dot\ld_1=-\frac{\partial H}{\partial\xp}=0,\quad\dot\ld_2=-\frac{\partial H}{\partial\yp}=0,\\\dot\gm_1=-\frac{\partial H}{\partial\xe}=0,\quad\dot\gm_2=-\frac{\partial H}{\partial\ye}=0,
        \end{aligned}
    \end{equation*}
    so the costates $\ld_1$, $\ld_2$, $\gm_1$ and $\gm_2$ are constant. Again, according to the Pontryagin Maximum Principle, we have
    \begin{equation*}
        \begin{aligned}
            \dot\ld_3&=-\frac{\partial H}{\partial\vpx}=-\ld_1,\quad\ld_3(t_f)=\mu_1\frac{\partial\Psi}{\partial\vpx}=0,\\\dot\ld_4&=-\frac{\partial H}{\partial\vpy}=-\ld_2,\quad\ld_4(t_f)=\mu_2\frac{\partial\Psi}{\partial\vpy}=0,
        \end{aligned}
    \end{equation*}
    where $\mu_1$ and $\mu_2$ are Lagrange multipliers and $\Psi$ is given by~\eqref{psi}. Therefore, we have
    \begin{equation}
        \ld_3(t)=-\ld_1t+\ld_1t_f,\quad\ld_4(t)=-\ld_2t+\ld_2t_f.
    \label{ldd}
    \end{equation}
    
    $P$ wants the Hamiltonian \eqref{hmt} to be small, while $E$ aims for the opposite. Thus, from the Hamiltonian \eqref{hmt} and \eqref{ldd}, we have
    \begin{equation*}
        \begin{aligned}
            \cos\ttp^*&=-\frac{\ld_3}{\sqrt{\ld_3^2+\ld_4^2}}=-\frac{\ld_1}{\sqrt{\ld_1^2+\ld_2^2}},\\\sin\ttp^*&=-\frac{\ld_4}{\sqrt{\ld_3^2+\ld_4^2}}=-\frac{\ld_2}{\sqrt{\ld_1^2+\ld_2^2}},\\\cos\tte^*&=\frac{\gm_1}{\sqrt{\gm_1^2+\gm_2^2}},\\\sin\tte^*&=\frac{\gm_2}{\sqrt{\gm_1^2+\gm_2^2}},
        \end{aligned}
    \end{equation*}
    which means that $\ttp^*$ and $\tte^*$ are constant.
    
    For $\ap^*$ and $\ve^*$, we have
    \begin{equation*}
        \begin{aligned}
            \frac{\partial H}{\partial\ap}&=\ld_3\cos\ttp+\ld_4\sin\ttp=-\sqrt{\ld_3^2+\ld_4^2}<0,\\\frac{\partial H}{\partial\ve}&=\gm_1\cos\tte+\gm_2\sin\tte=\sqrt{\gm_1^2+\gm_2^2}>0.
        \end{aligned}
    \end{equation*}
    
    For $P$ (or $E$), in order to minimize (or maximize) the Hamiltonian, $\ap$ (or $\ve$) should take the maximum, and thus
    \begin{equation*}
        \ap^*=\apb,\quad\ve^*=\veb.
    \end{equation*}
\end{proof}

From Lemma \ref{lm1}, we know that, if $P$ can capture $E$ before $P$ reaches its maximum speed, the optimal strategy for $P$ is to use the maximum acceleration and to maintain a constant direction of acceleration, while the optimal strategy for $E$ is to move with the maximum velocity in a fixed direction. Using these results, we can obtain the  positions that $P$ and $E$ can reach at a given time $t$ before $P$ reaches its maximum speed.
\begin{lemma}[Reachability circles]
    If $P$ and $E$ move according to the strategies in Lemma \ref{lm1}, then the positions that $P$ and $E$ can reach at time $t$ before $P$ reaches its maximum speed form two circles $\cc_P$ and $\cc_E$, respectively, and the centers and radii of them are
    \begin{align}\label{secc}
        \begin{split}
            \bc_P(t)&=(\xp^0+\vpx^0t,\yp^0+\vpy^0t)^\top,\quad\bc_E=(\xe^0,\ye^0)^\top,\\R_P(t)&=\frac{1}{2}\apb t^2,\quad R_E(t)=\veb t.
        \end{split}
    \end{align}
\label{lm2}
\end{lemma}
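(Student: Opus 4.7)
The plan is to compute both reachable sets by direct integration of the dynamics~\eqref{dyn}, using the structural form of the optimal controls supplied by Lemma~\ref{lm1}, and then recognize each one-parameter family of endpoints as a circle traced out by the free directional parameter.

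First I would handle $E$, which is the easier piece. Since $\ve^*=\veb$ and $\tte$ is constant along the optimal trajectory, integrating $\dot\xe=\veb\cos\tte$ and $\dot\ye=\veb\sin\tte$ from $0$ to $t$ gives
\begin{equation*}
\xe(t)=\xe^0+\veb t\cos\tte,\qquad \ye(t)=\ye^0+\veb t\sin\tte.
\end{equation*}
As the free parameter $\tte$ sweeps $[0,2\pi)$, the point $(\xe(t),\ye(t))$ traces exactly the circle $\cc_E$ centered at $\bc_E=(\xe^0,\ye^0)^\top$ with radius $R_E(t)=\veb t$, which matches~\eqref{secc}.

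Next I would treat $P$. With $\ap^*=\apb$ and $\ttp$ held constant, integrating $\dot\vpx=\apb\cos\ttp$ and $\dot\vpy=\apb\sin\ttp$ gives $\vpx(t)=\vpx^0+\apb t\cos\ttp$ and $\vpy(t)=\vpy^0+\apb t\sin\ttp$; integrating once more yields
\begin{equation*}
\xp(t)=\xp^0+\vpx^0 t+\tfrac12\apb t^2\cos\ttp,\quad \yp(t)=\yp^0+\vpy^0 t+\tfrac12\apb t^2\sin\ttp.
\end{equation*}
Here the initial-position and initial-velocity contributions are independent of the control direction $\ttp$, while the acceleration contribution is a vector of fixed magnitude $\tfrac12\apb t^2$ and free direction $\ttp$. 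Thus as $\ttp$ varies over $[0,2\pi)$ the endpoint $(\xp(t),\yp(t))$ traces precisely the circle $\cc_P$ with center $\bc_P(t)=(\xp^0+\vpx^0 t,\yp^0+\vpy^0 t)^\top$ and radius $R_P(t)=\tfrac12\apb t^2$, as claimed in~\eqref{secc}.

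There is no hard obstacle in this lemma; the only subtlety is that the hard velocity constraint $\vp\le\vpb$ must not alter the computation, and this is guaranteed by the standing hypothesis that we are in the regime before $P$ reaches its maximum speed, so the bound is inactive on $[0,t]$ and the integration above is unimpeded. I would note this explicitly at the end of the proof to justify why the purely kinematic parametrization really does describe the full reachable set under the strategies of Lemma~\ref{lm1}.
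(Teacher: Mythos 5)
Your proof is correct and follows essentially the same route as the paper: integrate the dynamics under the constant-direction, maximum-effort controls from Lemma~\ref{lm1}, then recognize the resulting one-parameter families of endpoints as the circles in~\eqref{secc} (the paper does this by rewriting the parametrizations as standard circle equations). Your closing remark that the velocity bound is inactive in this regime is a harmless and reasonable addition.
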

\begin{proof}
    From Lemma~\ref{lm1}, we know that the optimal strategy for $P$ is to use the maximum acceleration and to maintain a constant direction of acceleration. Thus, the position that $P$ can reach at time $t$ before reaching its maximum speed when $P$ moves with the maximum acceleration $\apb$ in the direction of $\ttp$ can be described by
    \begin{equation}
        \begin{cases}
            \xp(\ttp,t)=\xp^0+\vpx^0 t+\frac{1}{2}\apb\cos\ttp\cdot t^2,\\\yp(\ttp,t)=\yp^0+\vpy^0 t+\frac{1}{2}\apb\sin\ttp\cdot t^2,
        \end{cases}
    \label{pt}
    \end{equation}
    which can be equivalently rewritten in the form of the standard equation of a circle:
    \begin{equation}
        (\xp-\xp^0-\vpx^0 t)^2+(\yp-\yp^0-\vpy^0 t)^2=\frac{1}{4}\apb^2t^4.
    \label{cp}
    \end{equation}

    Similarly, the position that $E$ can reach at time $t$ before $P$ reaches its maximum speed when $E$ moves with the maximum velocity $\veb$ in the $\tte$ direction is
    \begin{equation}
        \begin{cases}
            \xe(\tte,t)=\xe^0+\veb\cos\tte\cdot t,\\\ye(\tte,t)=\ye^0+\veb\sin\tte\cdot t,
        \end{cases}
    \label{et}
    \end{equation}
    which can be rewritten as
    \begin{equation}
        (\xe-\xe^0)^2+(\ye-\ye^0)^2=\veb^2t^2.
    \label{ce}
    \end{equation}

\end{proof}

From~\eqref{secc}, we notice that as time $t$ progresses, the center of $\cc_P$ moves with a constant velocity that is equal to the initial velocity of $P$, and the radius of $\cc_P$ expands at a rate that is a quadratic function of $t$. Meanwhile, the center of $\cc_E$ remains stationary, and the radius of $\cc_E$  expands at a rate that is a linear function of $t$. Therefore, after a certain period of time, $\cc_E$ must eventually be contained within $\cc_P$. Moreover, during this time period, there must exist a moment when $\cc_E$ is internally tangent to $\cc_P$. By analyzing the process from when $\cc_P$ and $\cc_E$ are disjoint to when $\cc_E$ is contained within $\cc_P$, we obtain the following lemma.

\begin{lemma}[Tangency-based capture guarantee]\label{lm3}
    Suppose $P$ and $E$ move according to the strategies in Lemma \ref{lm1} and $P$ can capture $E$ before $P$'s speed reaches the maximum. If $\cc_E$ is internally tangent to $\cc_P$ at time $t_0$, then $P$ can always capture $E$ no later than $t_0$ regardless of the strategy chosen by $E$.
\end{lemma}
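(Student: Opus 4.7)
The plan is to convert the internal tangency into a capture guarantee in two steps: (i) promote Lemma~\ref{lm2}'s circles to full reachable-set inclusions, and (ii) build a feedback strategy for $P$ that realizes this inclusion as an actual coincidence at $t_0$. For notation, write $\mathbf{r}_P(t)=(\xp(t),\yp(t))^\top$, $\mathbf{r}_E(t)=(\xe(t),\ye(t))^\top$, $\mathbf{v}_P(t)=(\vpx(t),\vpy(t))^\top$, and let $\mathbf{a}_P,\mathbf{v}_E$ denote the control vectors.

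First I would observe that the closed disks $\bar{\cc}_P(t)$ and $\bar{\cc}_E(t)$ are actually the full reachable sets at time $t$ under any admissible controls, not only the constant-direction maximum-magnitude ones from Lemma~\ref{lm1}. For $P$, the deviation from the ballistic path $\mathbf{r}_P^0+\mathbf{v}_P^0 t$ is $\int_0^t(t-s)\mathbf{a}_P(s)\,ds$ with norm at most $\tfrac{1}{2}\apb t^2$, and every such vector is attained by a suitable constant acceleration; the $E$-side is immediate. Internal tangency at $t_0$, i.e.\ $\|\bc_E-\bc_P(t_0)\|=R_P(t_0)-R_E(t_0)$, is therefore equivalent to the containment $\bar{\cc}_E(t_0)\subseteq\bar{\cc}_P(t_0)$, so that whatever strategy $E$ chooses, its terminal position $\mathbf{r}_E(t_0)$ lies in $P$'s reachable disk.

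Next I would introduce the zero-effort-miss vector $\mathbf{z}(t):=\mathbf{r}_E(t)-\mathbf{r}_P(t)-\mathbf{v}_P(t)(t_0-t)$ together with the scalar envelope $r(\tau):=\tfrac{1}{2}\apb\tau^2-\veb\tau$; internal tangency gives the initial equality $\|\mathbf{z}(0)\|=r(t_0)$. A direct computation yields $\dot{\mathbf{z}}=\mathbf{v}_E-\mathbf{a}_P(t_0-t)$, and letting $P$ use the bang-bang feedback $\mathbf{a}_P=\apb\mathbf{z}/\|\mathbf{z}\|$ gives, against any admissible $E$, the bound $\tfrac{d}{dt}\|\mathbf{z}\|\le\veb-\apb(t_0-t)$, which is exactly $\tfrac{d}{dt}r(t_0-t)$. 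Hence $\|\mathbf{z}(t)\|-r(t_0-t)$ is non-increasing with initial value zero, so $\|\mathbf{z}(t)\|\le r(t_0-t)$ on $[0,t_0]$; evaluating at $t=t_0$ (where $r(0)=0$) yields $\mathbf{r}_P(t_0)=\mathbf{r}_E(t_0)$, which is capture.

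The main obstacle I expect is the endgame: the envelope $r(\tau)$ dips below zero on $(0,2\veb/\apb)$, so the bound forces $\|\mathbf{z}\|$ to vanish at an intermediate time $t^\star=t_0-2\veb/\apb$, at which the proposed feedback is singular and the clean Lyapunov argument stops. At $t^\star$, however, one checks that $P$'s and $E$'s disks of positions reachable by $t_0$ from their current states coincide exactly (common center $\mathbf{r}_E(t^\star)$ and common radius $2\veb^2/\apb$), so the remainder is a degenerate replica of the original problem. Closing this last leg — for example via the switching cancellation law $\mathbf{a}_P=\mathbf{v}_E/(t_0-t)$ on the sub-interval where it is admissible, or more formally via a nonanticipative-strategy/measurable-selection construction — is the most delicate step and will demand the most care.
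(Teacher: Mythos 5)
Your route is genuinely different from the paper's, and it is not yet a complete proof. The paper takes the lemma's hypothesis at face value: $E$ moves per Lemma~\ref{lm1}, i.e., along a fixed ray at speed $\veb$. For each fixed $\tte$ it defines the signed quantity $g(t,\tte)=\mathbf u_E^\top(\bc_P(t)-\bc_E)-S(t)$, which is negative exactly when $E$'s position at time $t$ lies outside $\cc_P(t)$ and nonnegative when it lies on or inside it, observes that $g(t_\ep,\tte)<0$ for small $t_\ep$ (disjoint circles) and $g(t_0,\tte)\ge 0$ (tangency implies containment), and concludes by the Intermediate Value Theorem that $E$'s position first meets $\partial\cc_P(t')$ at some $t'\le t_0$, where $P$ can intercept. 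That is an elementary reachability-plus-IVT argument along a known ray, with no singular endgame. Your step (i) — that the disks are the \emph{full} reachable sets and that internal tangency gives $\bar\cc_E(t_0)\subseteq\bar\cc_P(t_0)$ — is correct and in fact strengthens Lemma~\ref{lm2}; combined with the same IVT-along-the-ray observation it would already close the lemma as stated.

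The genuine gap is in step (ii), and you have correctly located it yourself: the zero-effort-miss computation $\dot{\mathbf z}=\mathbf v_E-\mathbf a_P(t_0-t)$ and the comparison $\tfrac{d}{dt}\|\mathbf z\|\le\veb-\apb(t_0-t)$ are fine, but since $r(\tau)=\tfrac12\apb\tau^2-\veb\tau<0$ on $(0,2\veb/\apb)$, the envelope forces $\|\mathbf z\|$ to hit zero at $t^\star=t_0-2\veb/\apb$, where the feedback $\mathbf a_P=\apb\mathbf z/\|\mathbf z\|$ is undefined and the Lyapunov inequality can no longer be propagated. The patches you sketch do not close it: the cancellation law $\mathbf a_P=\mathbf v_E/(t_0-t)$ violates the bound $\|\mathbf a_P\|\le\apb$ on $(t_0-\veb/\apb,\,t_0)$ and, worse, requires $P$ to observe $E$'s instantaneous control rather than its state, which is outside the information structure of the game; and the ``degenerate replica'' at $t^\star$ (coincident disks, $D=0=R_P-R_E$) does not shrink under rescaling, so the recursion does not terminate. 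As written, the proof establishes capture against arbitrary admissible evaders only up to time $t^\star$, not coincidence at or before $t_0$. Either close the endgame rigorously (e.g., via a nonanticipative-strategy construction), or — more in the spirit of the lemma — drop the feedback machinery and finish with the paper's continuity argument, which your step (i) already sets up.
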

\begin{proof}
    By~\eqref{secc}, the parametric equation of the circle $\cc_E$ is
    \begin{equation*}
        \mathbf x_E(\tte,t)=\mathbf c_E+R_E(t)\cdot\mathbf u_E,
    \end{equation*}
    where $\mathbf u_E=(\cos\tte,\sin\tte)^\top$ is a unit vector. Define a displacement vector $\mathbf V(t)=\mathbf c_P(t)-\mathbf c_E$, whose magnitude is the distance $D(t)$ between the centers of circles $\cc_P$ and $\cc_E$, i.e. $D(t)=\|\mathbf V(t)\|$. For any $\tte\in[0,2\pi)$ chosen by $E$, if $E$ is captured by $P$ at time $t$, then the position of $P$ at this moment, denoted by $T$, must lie on $\cc_E$. The coordinate $\mathbf x_T=\mathbf c_E+R_E(t)\cdot\mathbf u_E$ of $T$ must satisfy:
    \begin{equation*}
        \begin{aligned}
            &\|\mathbf c_P(t)-\mathbf x_T\|=R_P(t)\\
            \Leftrightarrow&\|\mathbf V(t)-R_E(t)\mathbf u_E\|=R_P(t)\\
            \lra&\|\mathbf V(t)\|^2-2R_E(t)\mathbf u^\top_E\mathbf V(t)+R_E^2(t)\|\mathbf u_E\|^2=R_P^2(t)\\
            \lra&D^2(t)-2R_E(t)\mathbf u^\top_E\mathbf V(t)+R_E^2(t)=R_P^2(t)\\
            \lra&\mathbf u^\top_E\mathbf V(t)=\frac{D^2(t)+R_E^2(t)-R_P^2(t)}{2R_E(t)}.
        \end{aligned}
    \end{equation*}
    Let
    \begin{align*}
        S(t)&=\frac{D^2(t)+R_E^2(t)-R_P^2(t)}{2R_E(t)},\\
        g(t,\tte)&=\mathbf u^\top_E\mathbf V(t)-S(t).
    \end{align*}
    Then $E$ is captured by $P$ at time $t$ when $E$ moves in the $\tte$ direction if and only if $g(t,\tte)=0$.

    When the game has progressed for a short period of time $t_\epsilon$, the circles $\cc_P$ and $\cc_E$ are disjoint, and $\mathbf x_T$ satisfies:
    \begin{equation*}
        \|\mathbf c_P(t_\ep)-\mathbf x_T\|>R_P(t_\ep),
    \end{equation*}
    which is equivalent to $g(t_\ep,\tte)<0$.

    When $t=t_0$, $\cc_E$ is internally tangent to $\cc_P$. If $P$ has not captured $E$ before this moment, $\mathbf x_T$ satisfies:
    \begin{equation*}
    \|\mathbf c_P(t_0)-\mathbf x_T\|\le R_P(t_0),
    \end{equation*}
    which is $g(t_0,\tte)\ge0$.

    Since $g(t, \tte)$ is continuous with respect to $t$, and for any $\tte \in [0, 2\pi)$, we have $g(t_\ep, \tte) < 0$ and $g(t_0, \tte) \ge 0$. By the Intermediate Value Theorem, there exists $t' \in (t_\ep, t_0]$ such that $g(t', \tte) = 0$, and in this case $E$ is captured by $P$ at time $t'$ when $E$ moves in the $\tte$ direction.
\end{proof}

To obtain the strategies for $P$ and $E$ when $P$ can capture $E$ before $P$’s speed reaches the maximum, we need to compute the time it takes for $P$ to reach its maximum speed for different acceleration directions.
\begin{lemma}[Time when the pursuer reaches the max speed]When $P$ follows the strategy given in Lemma~\ref{lm1} and selects $\ttp$ as the direction of acceleration, the time required for $P$ to reach the maximum speed is given by:
\begin{equation}
    \begin{aligned}
            \tpt(\ttp)=&\frac{\sqrt{\vpb^2-(\vpx^0\sin\ttp-\vpy^0\cos\ttp)^2}}{\apb}\\&-\frac{\vpx^0\cos\ttp+\vpy^0\sin\ttp}{\apb}.
    \end{aligned}  
\label{tptp}
\end{equation}
\label{lmnew}
\end{lemma}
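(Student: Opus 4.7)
The plan is to reduce the problem to solving a scalar quadratic equation in $t$. Under the strategy from Lemma~\ref{lm1}, the acceleration is the constant vector $\bar a_P(\cos\ttp,\sin\ttp)^\top$, so the velocity components evolve linearly as $\vpx(t)=\vpx^0+\apb\cos\ttp\cdot t$ and $\vpy(t)=\vpy^0+\apb\sin\ttp\cdot t$. First I would write $\vp(t)^2=\vpx(t)^2+\vpy(t)^2$ and expand to obtain a quadratic in $t$ of the form
\begin{equation*}
\vp(t)^2=\apb^2 t^2+2\apb\bigl(\vpx^0\cos\ttp+\vpy^0\sin\ttp\bigr)t+(\vpx^0)^2+(\vpy^0)^2.
\end{equation*}
Setting $\vp(t)^2=\vpb^2$ yields the quadratic whose smallest positive root is exactly $\tpt(\ttp)$.

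Next, I would apply the quadratic formula and simplify the discriminant. The key observation is the orthogonal decomposition of the initial velocity with respect to the acceleration direction: the identity
\begin{equation*}
(\vpx^0)^2+(\vpy^0)^2=\bigl(\vpx^0\cos\ttp+\vpy^0\sin\ttp\bigr)^2+\bigl(\vpx^0\sin\ttp-\vpy^0\cos\ttp\bigr)^2
\end{equation*}
lets the discriminant collapse to $\vpb^2-(\vpx^0\sin\ttp-\vpy^0\cos\ttp)^2$, after which the formula~\eqref{tptp} is immediate.

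Finally I would justify taking the ``$+$'' branch of the quadratic formula. Since by assumption $P$'s initial speed satisfies $(\vpx^0)^2+(\vpy^0)^2<\vpb^2$, the constant term of the quadratic is negative while the leading coefficient $\apb^2$ is positive; hence the two roots have opposite signs, and the unique positive root is obtained by choosing the ``$+$'' sign. The only mild subtlety is to note that the discriminant is nonnegative (so that $\tpt$ is real-valued): this follows because the perpendicular component of the initial velocity has magnitude at most $\sqrt{(\vpx^0)^2+(\vpy^0)^2}<\vpb$. I expect this sign/branch selection and the algebraic rearrangement of the discriminant to be the only delicate points; the rest is a direct computation.
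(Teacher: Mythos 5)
Your proposal is correct and follows essentially the same route as the paper: write the velocity components under maximum constant acceleration, impose $\vpx^2+\vpy^2=\vpb^2$, and solve the resulting quadratic in $t$. The paper omits the algebra, so your explicit handling of the discriminant simplification and the positive-root (``$+$'' branch) selection simply fills in details the paper leaves implicit.
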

\begin{proof}
    By Lemma~\ref{lm1}, $P$ accelerates with the  maximum acceleration before reaching its maximum speed. Therefore, the velocity components of $P$ along the $x$- and $y$-axes satisfy:
    \begin{equation}
        \begin{cases}
            \vpx(\ttp,t)=\vpx^0+\apb\cos\ttp\cdot t,\\\vpy(\ttp,t)=\vpy^0+\apb\sin\ttp\cdot t.
        \end{cases}
    \label{vpt}
    \end{equation}
    When $P$ reaches its maximum speed at time $\tpt(\ttp)$, we have $\vpx^2(\ttp,\tpt(\ttp))+\vpy^2(\ttp,\tpt(\ttp))=\vpb^2$. Combining with~\eqref{vpt}, we obtain~\eqref{tptp}.
\end{proof}

To ensure that $P$ can capture $E$ before reaching its maximum speed when $P$ chooses $\ttp$ as its acceleration direction, it is necessary that $t_f < \tpt(\ttp)$.

We now attempt to derive the strategies for the case when $P$ can capture $E$ before reaching its maximum speed. Since $E$'s goal is to maximize the capture time during the time interval $(0,t_0]$, where $t_0$ is defined in Lemma~\ref{lm3} as the time when $\cc_E$ is internally tangent to $\cc_P$, $E$ should choose a strategy such that it is captured by $P$ at time $t_0$. In other words, $E$ aims to delay capture by $P$ until $\cc_E$ is internally tangent to $\cc_P$ as illustrated in Fig.~\ref{fig1}.
\begin{figure}[!t]
\centerline{\includegraphics[width=\columnwidth]{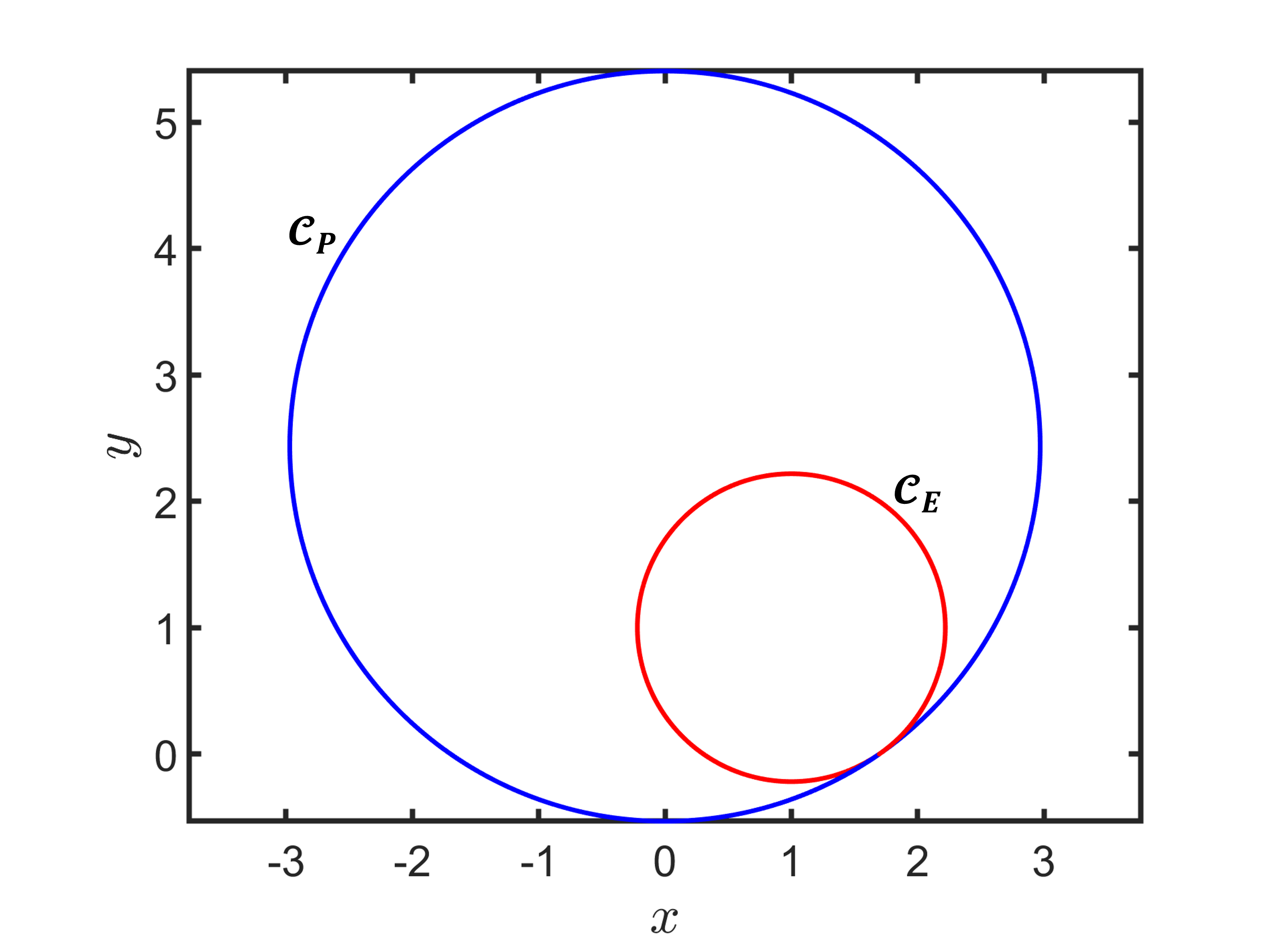}}
\caption{The circle $\cc_E$ is internally tangent to the circle $\cc_P$.}
\label{fig1}
\end{figure}
When $\cc_E$ is internally tangent to $\cc_P$, the distance between $\bc_P$ and $\bc_E$ is equal to the difference in the radii of $\cc_P$ and $\cc_E$. Using this property and~\eqref{secc}, we can obtain
\begin{equation}\label{queq}
    \Gamma(t)=0,
\end{equation}
where 
\begin{multline*}
  \Gamma(t)=\\  (\xp^0-\xe^0+\vpx^0t)^2+(\yp^0-\ye^0+\vpy^0t)^2-(\frac{1}{2}\apb t^2-\veb t)^2
\end{multline*}
is a quartic equation in $t$. All positive solutions of~\eqref{queq} correspond to the moments when $\cc_P$ and $\cc_E$ are internally tangent. Note that there are two cases when \eqref{queq}  holds true: $\cc_P$ is inscribed within $\cc_E$ or $\cc_E$ is inscribed within $\cc_P$. To ensure that the capture time $t$ corresponds to the case when $\cc_E$ is inscribed within $\cc_P$, the radius of $\cc_P$ must be greater than or equal to that of $\cc_E$, i.e. $R_P(t)\geq R_E(t)$. By~\eqref{secc}, we obtain $t\geq\frac{2\veb}{\apb}$. We define the set $\mathcal{T}$ as the set  of all positive $t$ that satisfy $t\geq\frac{2\veb}{\apb}$ and \eqref{queq}, then we have $t_f \in \mathcal{T}$.

For any $t\in \mathcal{T}$, we can determine the  equations for circles $\cc_P$ and $\cc_E$ at time $t$ using \eqref{cp} and \eqref{ce}, and further compute the coordinates of the tangency point $\bx_f=(x_f,y_f)^\top$ as
\begin{equation}
        \begin{aligned}
            x_f=\frac{\apb t^2\xe^0-2\veb t(\xp^0+\vpx^0t)}{\apb t^2-2\veb t},\\y_f=\frac{\apb t^2\ye^0-2\veb t(\xp^0+\vpy^0t)}{\apb t^2-2\veb t},
        \end{aligned}
    \label{tanp}
\end{equation}
which is the capture point corresponding to $t\in \mathcal{T}$. Furthermore, by \eqref{pt} and \eqref{et}, where $\bx_P(\ttp,t)=\bx_E(\ttp,t)=\bx_f$, we obtain $\ttp$ and $\tte$ for $P$ and $E$ as follows
\begin{equation}
        \begin{aligned}
            \cos\ttp&=\cos\tte=\frac{\xp^0+\vpx^0t-\xe^0}{\veb t-\frac{1}{2}\apb t^2},\\\sin\ttp&=\sin\tte=\frac{\yp^0+\vpy^0t-\ye^0}{\veb t-\frac{1}{2}\apb t^2}.
        \end{aligned}
    \label{ops0}
\end{equation}

Since we consider the case when $P$ can capture $E$ before reaching its maximum speed, the capture time $t_f$ must satisfy $t_f < \tpt(\ttp^*)$, where $\ttp^*$ is the acceleration direction of $P$ corresponding to the capture time $t_f$ and can be calculated by \eqref{ops0}. Moreover, for each $t\in \mathcal{T}$, $\cc_E$ is internally tangent to $\cc_P$, and by Lemma~\ref{lm3}, $P$ is guaranteed to capture $E$ on or before time $t$. Since $P$'s goal is to capture $E$ as quickly as possible, if multiple instances occur during the game in which $\cc_E$ is inscribed in $\cc_P$, $P$ should choose to execute the capture at the first such instance. In other words, the capture time $t_f$ should be the smallest number in $\mathcal{T}$ that satisfies  $t_f<\tpt(\ttp^*)$. The above discussion provides a method for determining the capture time $t_f$ in the case when $P$ is able to capture $E$ before reaching its maximum speed. With the capture time $t_f$, we can compute the coordinates of the tangency point by \eqref{tanp}, and further obtain $\ttp^*$ and $\tte^*$ for $P$ and $E$ by~\eqref{ops0}. Finally, according to Lemma \ref{lm1}, we can give the strategies for $P$ and $E$ under the capture time $t_f$ when capture can occur before $P$ reaches its maximum speed by~\eqref{ops0} as
\begin{equation}
    \begin{aligned}
            \cos\ttp^*&=\cos\tte^*=\frac{\xp^0+\vpx^0t_f-\xe^0}{\veb t_f-\frac{1}{2}\apb t_f^2},\\\sin\ttp^*&=\sin\tte^*=\frac{\yp^0+\vpy^0t_f-\ye^0}{\veb t_f-\frac{1}{2}\apb t_f^2},\\
            \ap^*&=\apb,\quad\quad\ve^*=\veb.
        \end{aligned}
    \label{ops1}
\end{equation}
Since the strategies in \eqref{ops1} are in a closed form with respect to the capture time $t_f$, and $t_f$ is an analytical solution to the quartic equation \eqref{queq}, the strategies in \eqref{ops1} are analytical.

\subsection{Strategies when the pursuer cannot capture the evader before reaching the maximum speed}\label{subsection:aftermax}
In this subsection, we  discuss the strategies for $P$ and $E$ when $P$ cannot capture $E$ before $P$ reaches its maximum speed. We emphasize that the derived results are not direct extensions of strategies in \eqref{ops1} for the case when $P$ can capture $E$ before reaching the maximum speed. Instead, entirely new strategies are developed that account for the whole game process, from the initial game state to the capture event.

\begin{lemma}[Necessary conditions for optimal strategies  when the pursuer cannot capture the evader before reaching its maximum speed]
    If $P$ cannot capture $E$ before $P$ reaches its maximum speed, i.e., $\vp(t_f)=\vpb$, then the optimal strategy for $P$ has two phases: i) before $P$ reaches its maximum speed, $P$'s optimal strategy is to maintain a fixed acceleration direction and accelerate at the maximum rate until the maximum speed is reached, i.e., when $\vp<\vpb$, $\ap^*=\apb$ and $\ttp^*$ is constant; ii) afterwards, the acceleration becomes zero, and $P$ continues to move at maximum speed along the direction of velocity at the moment it reaches the maximum speed, i.e., when $\vp=\vpb$, $\ap^*=0$. Moreover, $E$'s optimal strategy is to move at the maximum speed along a fixed direction, i.e., $\ve^*=\veb$ and $\tte^*$ is constant.
\label{lm4}
\end{lemma}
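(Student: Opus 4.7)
The plan is to adapt the Pontryagin-based argument of Lemma~\ref{lm1}, now splitting the trajectory at the instant $\tpt$ when $P$ first attains $\vp=\vpb$ and treating the state constraint $\vpx^2+\vpy^2\le\vpb^2$ separately on each arc. On the open interval $[0,\tpt)$ the constraint is inactive, so the Hamiltonian~\eqref{hmt} and the associated costate equations are identical to those used in Lemma~\ref{lm1}. Copying that derivation, $\ld_1,\ld_2,\gm_1,\gm_2$ stay constant, $\ld_3,\ld_4$ are affine in $t$ on this arc, and minimization of $H$ in $(\ap,\ttp)$ again forces $\ap^*=\apb$ together with a constant $\ttp^*$. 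The only difference from Lemma~\ref{lm1} is that the boundary data for $\ld_3,\ld_4$ come from the jump/transversality conditions at the entry point $\tpt$ onto the constraint boundary rather than from the terminal conditions on $\Psi$; provided $(\ld_3,\ld_4)$ does not vanish identically on $[0,\tpt)$, the saturation and constancy conclusions carry over verbatim.

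On the constrained arc $[\tpt,t_f]$ we have $\vpx^2+\vpy^2=\vpb^2$, and differentiating in $t$ yields the admissibility condition $\ap(\vpx\cos\ttp+\vpy\sin\ttp)=0$, so either $\ap=0$ or the acceleration is orthogonal to the current velocity. I would rule out the orthogonal-acceleration branch by a direct time-comparison argument: a perpendicular acceleration at maximum speed forces $P$ onto a circular arc, and for any horizon $T$ the Euclidean displacement achieved by such an arc is strictly less than $\vpb T$, whereas straight-line coasting with $\ap=0$ achieves exactly $\vpb T$ along the velocity direction acquired at $\tpt$. Since $E$ also moves in a straight line (established below) and $P$ only wishes to minimize the time to reach $E$'s line, every candidate trajectory with $\ap>0$ in Phase~II can be replaced by the $\ap=0$ coast without a later capture, forcing $\ap^*=0$.

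For the evader, the state constraint acts only on $P$'s variables, so the costate equations for $\gm_1,\gm_2$ and the stationarity conditions in $(\ve,\tte)$ are unchanged from Lemma~\ref{lm1}, giving $\ve^*=\veb$ and a constant $\tte^*$.

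The main obstacle will be making the Phase~II argument watertight. A formally rigorous derivation via PMP with a pure state constraint requires measure-valued multipliers and careful bookkeeping of jump conditions at the entry time $\tpt$, which is notationally heavy. The cleaner route is the trajectory-exchange argument sketched above, which bypasses constrained-PMP machinery and displays the \emph{bang-then-coast} structure directly; the delicate point is verifying that the replacement trajectory still satisfies both the acceleration and velocity bounds and that, against $E$'s optimal straight-line response, the exchange produces a weakly earlier capture time.
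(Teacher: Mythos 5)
Your overall route matches the paper's: the same two-phase split, the same augmented Hamiltonian with multiplier $\nu$ and the KKT dichotomy, the same differentiation of the active constraint $\vpx^2+\vpy^2=\vpb^2$ yielding $\ap(\vpx\cos\ttp+\vpy\sin\ttp)=0$, and the same treatment of $E$ through the unchanged costates $\gm_1,\gm_2$. The genuine gap is in Phase~II, precisely where you flag it, and it is not merely a matter of polish. Your justification for discarding the orthogonal-acceleration branch --- that $P$ ``only wishes to minimize the time to reach $E$'s line'' --- aims at the wrong target: capture requires $P$ to coincide with the moving evader at a common instant, and reaching $E$'s line earlier (at the wrong point along it) proves nothing, since $E$ has moved on. Likewise, the observation that a circular arc displaces strictly less than $\vpb T$ is only half the argument; you must still show that a straight-line coast, aimed at the right point, actually meets $E$ strictly earlier than every curved Phase-II trajectory does.

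The paper closes exactly this gap. It writes the capture condition on $[t_c,t_f]$ as a vector equation $\mathbf A=\mathbf B$, with $\mathbf A=\bigl(\int_{t_c}^{t_f}\cos\delta(t)\,dt,\ \int_{t_c}^{t_f}\sin\delta(t)\,dt\bigr)^\top$ and $\mathbf B$ collecting $E$'s straight-line displacement plus the relative position at $t_c$, scaled by $1/\vpb$. The integral triangle inequality gives $\|\mathbf A\|\le\tau$ with $\tau=t_f-t_c$, with equality if and only if the heading $\delta$ is constant. Setting $f(\tau)=\|\mathbf B\|$ and $g(\tau)=f(\tau)-\tau$, one has $g(0)>0$ and, because $\vpb>\veb$, $g(\tau)<0$ for large $\tau$; any capture under the orthogonal-acceleration branch has $\|\mathbf A\|<\tau_0$, hence $g(\tau_0)<0$, so the Intermediate Value Theorem yields $\tau'<\tau_0$ with $g(\tau')=0$, i.e.\ a constant-heading capture strictly earlier. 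This is the quantitative step your trajectory-exchange sketch needs; without it (or the measure-valued constrained-PMP machinery you mention and decline to carry out), the conclusion $\ap^*=0$ is asserted rather than proved. Your Phase-I hedge about costate boundary data at the junction is comparatively harmless: the paper likewise reuses the proof of Lemma~\ref{lm1} through the complementarity case $\nu=0$ without any jump-condition bookkeeping.
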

\begin{proof}
    Since $P$ cannot capture $E$ before $P$ reaches its maximum speed, there exists the state constraint
    \begin{equation}
        G(\mathbf x)=\vpx^2+\vpy^2-\vpb^2\le0,
    \label{sc}
    \end{equation}
    and the Hamiltonian of \eqref{dyn} is
    \begin{equation}
        \begin{aligned}H=&\ld_1\vpx+\ld_2\vpy+\ld_3\ap\cos\ttp+\ld_4\ap\sin\ttp\\&+\gm_1\ve\cos\tte+\gm_2\ve\sin\tte+1\\&+\nu(\vpx^2+\vpy^2-\vpb^2),
        \end{aligned}
        \label{zhekezaban}
    \end{equation}
    where $\ld_1$, $\ld_2$, $\ld_3$, $\ld_4$, $\gm_1$ and $\gm_2$ are costates, and $\nu\ge0$ is the Lagrange multiplier associated with the state constraint \eqref{sc}. By the Pontryagin Maximum Principle, we have
    \begin{equation*}
        \begin{aligned}
            \dot\gm_1=-\frac{\partial H}{\partial\xe}=0,\quad\dot\gm_2=-\frac{\partial H}{\partial\ye}=0,
        \end{aligned}
    \end{equation*}
    so the costates $\gm_1$ and $\gm_2$ are constant. $E$ wants the Hamiltonian \eqref{zhekezaban} to be as large as possible. Therefore, from the Hamiltonian \eqref{zhekezaban} we obtain
    \begin{equation*}
        \begin{aligned}
            \cos\tte^*=-\frac{\gm_1}{\sqrt{\gm_1^2+\gm_2^2}},\quad\sin\tte^*=-\frac{\gm_2}{\sqrt{\gm_1^2+\gm_2^2}},
        \end{aligned}
    \end{equation*}
    which implies that $\tte^*$ is constant.

    For $\ve^*$, we have
    \begin{equation*}
        \begin{aligned}
            \frac{\partial H}{\partial\ve}=\gm_1\cos\tte+\gm_2\sin\tte=\sqrt{\gm_1^2+\gm_2^2}>0.
        \end{aligned}
    \end{equation*}
    Thus, for $E$ to maximize the Hamiltonian \eqref{zhekezaban}, $\ve$ should take the maximum speed, i.e., $\ve^*=\veb$.
    
    By the Karush-Kuhn-Tucker conditions, we have
    \begin{equation*}
        \nu(\vpx^2+\vpy^2-\vpb^2)=0.
    \end{equation*}
    Therefore, there are two possible cases: i) $G(\mathbf x)<0$ and $\nu=0$; or ii) $G(\mathbf x)=0$ and $\nu\ge0$. These two cases correspond to situations where $P$ has not yet reached its maximum speed and where $P$ has already reached its maximum speed, respectively.
    
    The first case is $G(\mathbf x)<0$ and $\nu=0$. In this case, $P$ has not yet reached its maximum speed, and the state constraint~\eqref{sc} is inactive. By the proof of Lemma~\ref{lm1}, we know that  $\ttp^*$ is constant and $\ap^*=\apb$.

    The second case is $G(\mathbf x)=0$ and $\nu\ge0$. In this case, $P$ has already reached its maximum speed, and the state constraint~\eqref{sc} is active, i.e.,
    \begin{equation}
        G(\mathbf x)=\vpx^2+\vpy^2-\vpb^2=0.
    \label{sca}
    \end{equation}
    Taking the derivative of both sides of~\eqref{sca} with respect to time $t$, we obtain:
    \begin{equation*}
        \ap(\vpx\cos\ttp+\vpy\sin\ttp)=0.
    \end{equation*}
    For the above equation to hold, either $\ap = 0$ or $\vpx\cos\ttp+\vpy\sin\ttp=0$, meaning that the acceleration is zero and the velocity direction remains fixed, or the acceleration direction is perpendicular to the velocity direction. In the following, we show that  setting the acceleration to zero allows $P$ to capture $E$ more quickly. 
    
    Let $t_c$ denote the moment when $P$ reaches its maximum speed. For $t \in [t_c, t_f]$, the motion of $E$ satisfies
    \begin{equation}
            \begin{cases}
                \xe(t)=\xe(t_c)+\veb\cos\tte^*\cdot(t-t_c),\\\ye(t)=\ye(t_c)+\veb\sin\tte^*\cdot(t-t_c).
            \end{cases}
    \label{shenmee}
    \end{equation}
    Let the velocity direction angle of $P$ be $\delta(t)$, then we have
    \begin{equation*}
        \begin{cases}
            \vpx(t)=\vpb\cos\delta(t),\\\vpy(t)=\vpb\sin\delta(t),
        \end{cases}
    \end{equation*}
    and the motion of $P$ satisfies
    \begin{equation}
            \begin{cases}
                \xp(t)=\xp(t_c)+\vpb\int_{t_c}^t\cos\delta(\tau)d\tau,\\\yp(t)=\yp(t_c)+\vpb\int_{t_c}^t\sin\delta(\tau)d\tau.
            \end{cases}
    \label{shenmep}
    \end{equation}
    Since $\xp(t_f)=\xe(t_f)$ and $\yp(t_f)=\ye(t_f)$ at the moment of capture, substituting these terminal conditions into the motion equations \eqref{shenmee} and \eqref{shenmep}, we obtain
    \begin{equation*}
        \begin{aligned}
            \vpb\int_{t_0}^
            {t_f}\cos\delta(t)dt&=\veb\cos\tte^*\cdot(t_f-t_c)+\xe(t_c)-\xp(t_c),\\\vpb\int_{t_c}^
            {t_f}\sin\delta(t)dt&=\veb\sin\tte^*\cdot(t_f-t_c)+\ye(t_c)-\yp(t_c).
        \end{aligned}
    \end{equation*}
    Let $\mathbf A = (\int_{t_c}^{t_f}\cos\delta(t)dt,\int_{t_c}^{t_f}\sin\delta(t)dt)^\top$ and $\mathbf B = \frac{1}{\vpb}(\veb\cos\tte^*\cdot(t_f-t_c)+\xe(t_c)-\xp(t_c),\veb\sin\tte^*\cdot(t_f-t_c)+\ye(t_c)-\yp(t_c))^\top$, then we can rewrite the capture condition as $\mathbf{A} = \mathbf{B}$, i.e., $\mathbf{A}$ and $\mathbf{B}$ have the same magnitude and direction. Let $\tau = t_f - t_c$, then the objective of $P$ is to  minimize $\tau$. By the triangle inequality for vector-valued function integrals, we have
    \begin{equation}
        \begin{aligned}
            \|\mathbf A\|&=\|(\int_{t_c}^{t_f}\cos\delta(t)dt,\int_{t_c}^{t_f}\sin\delta(t)dt)^\top\|\\&\le\int_{t_c}^{t_f}\|(\cos\delta(t),\sin\delta(t))^\top\|dt\\&=\int_{t_c}^{t_f}1dt=\tau,
        \end{aligned}
    \label{budeng}
    \end{equation}
    and the equality in~\eqref{budeng} holds if and only if $\delta(t)$ is constant, i.e., the velocity direction of $P$ remains fixed.

    Define the function
    \begin{equation*}
        \begin{aligned}
             f(\tau)=\|\mathbf B\|=&\frac{1}{\vpb}\|(\veb\tau\cos\tte^*+\xe(t_c)-\xp(t_c),\\&\qquad \veb\tau\sin\tte^*+\ye(t_c)-\yp(t_c))^\top\|.
        \end{aligned}
    \end{equation*}
    Then, when capture occurs, the following condition must hold:
    \begin{equation}
        f(\tau)=\|\mathbf B\|=\|\mathbf A\|.
    \end{equation}
    When the velocity of $P$ remains constant, capture necessarily requires that $f(\tau)=\|\mathbf A\|=\tau$, and when the acceleration direction of $P$ is perpendicular to its velocity direction, capture necessarily requires that $f(\tau)=\|\mathbf A\|<\tau$. We next show the first case corresponds to a smaller $\tau$ for $P$.
    
    Define $g(\tau) = f(\tau) - \tau$. Then, we have $g(0) = \frac{1}{\vpb}\|(\xe(t_c)-\xp(t_c),\ye(t_c)-\yp(t_c))^\top\| > 0$, and when $\tau$ is sufficiently large, $g(\tau) = \frac{\sqrt{(\veb\tau\cos\tte^*+\xe(t_c)-\xp(t_c))^2+(\veb\tau\sin\tte^*+\ye(t_c)-\yp(t_c))^2}}{\vpb}-\tau< 0$. Since $g(\tau)$ is continuous on $[0, +\infty)$, by the Intermediate Value Theorem, $g(\tau) = 0$ has at least one solution in $[0, +\infty)$. For any capture moment $\tau_0$ corresponding to the case when the acceleration direction of $P$ is perpendicular to its velocity direction, we have $g(\tau_0) < 0$. Given that $g(0) > 0$ and $g(\tau)$ is continuous, there must exist some $0 < \tau' < \tau_0$ such that $g(\tau') = 0$, which corresponds to the case when $P$ moves with a constant velocity direction. Therefore, any non-straight motion results in a capture time $\tau_0$ strictly greater than a straight-line case $\tau'$. As a result, to minimize the capture time, $P$ should continue moving at the maximum speed along the velocity direction after reaching the maximum speed.
\end{proof}

Next, similar to Lemma~\ref{lm2}, we characterize the positions that $P$ and $E$ can reach at time $t$ under their optimal strategies given by Lemma \ref{lm4}. Note that the position $E$ can reach at time $t$ when moving with a velocity in the direction of $\tte$ is still given by \eqref{et}, so the points that $E$ can reach also form a circle with the same standard equation as $\cc_E$ in \eqref{ce}. Under the strategy given in Lemma \ref{lm4}, $P$ first moves with maximum acceleration until it reaches its maximum speed, and then continues to move along the velocity direction with the maximum speed. According to \eqref{tptp}, given $\ttp$ and for $t\le\tpt(\ttp)$, the positions $P$ can reach at time $t$ when moving under the strategy described in Lemma~\ref{lm4} are still characterized by~\eqref{pt}. While for $t>\tpt(\ttp)$, the positions that $P$ can reach are characterized by
\begin{equation}
    \left\{
        \begin{aligned}
            \xp'(\ttp,t)=\xp^0+\vpx^0 t+\frac{1}{2}\apb\cos\ttp\cdot(2t\tpt(\ttp)-\tpt^2(\ttp)),\\\yp'(\ttp,t)=\yp^0+\vpy^0 t+\frac{1}{2}\apb\sin\ttp\cdot(2t\tpt(\ttp)-\tpt^2(\ttp)).
        \end{aligned}
    \right.
\label{op'}
\end{equation}
From~\eqref{tptp}, we observe that the time required for $P$ to reach its maximum speed varies with the chosen acceleration direction $\ttp$. Therefore, for any given time $t$, $P$ may reach its maximum speed for some acceleration directions,  while for other directions it may not. At this moment, the set of points that $P$ can reach is composed by piecing together~\eqref{pt} and~\eqref{op'}. Regardless of whether the set of points that $P$ can reach is described solely by \eqref{op'}, or jointly by equations \eqref{pt} and \eqref{op'}, we note that under the strategies in Lemma~\ref{lm4}, the positions that $P$ can reach no longer form a circle, but rather form an oval shape shown in Fig.~\ref{fig2}. Nevertheless, we can still derive an important lemma using \eqref{op'}.

\begin{figure}[!t]
\centerline{\includegraphics[width=\columnwidth]{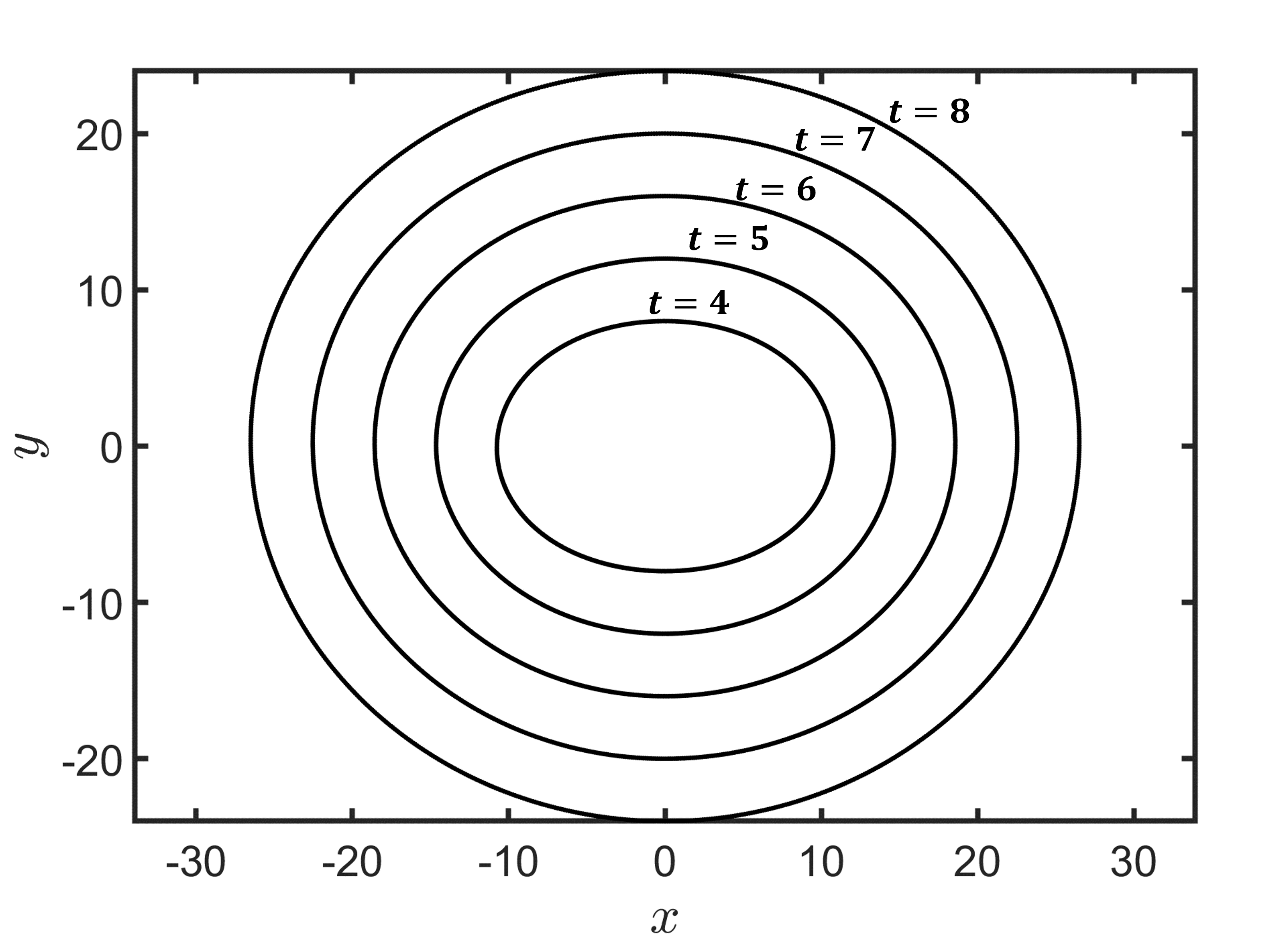}}
\caption{$\xp=0$, $\yp=0$, $\vpx=0$, $\vpy=4$, $\vpb=4$, $\apb=2$, the set of points that P can reach when $t=4$, $t=5$, $t=6$, $t=7$ and $t=8$.}
\label{fig2}
\end{figure}

\begin{lemma}[Capture guarantee with faster pursuer]
    If $\vpb>\veb$, then $P$ is guaranteed to capture $E$.
\label{lm5}
\end{lemma}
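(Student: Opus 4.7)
The plan is to combine the two regimes analyzed earlier. If $P$ can capture $E$ before reaching the maximum speed, Lemma~\ref{lm3} already yields a finite capture time. So I focus on the complementary case and show that the oval reachable set of $P$ eventually engulfs the reachable disk of $E$, whereupon an argument analogous to Lemma~\ref{lm3} forces capture.

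First, I would describe $P$'s reachable set $\mathcal{R}_P(t)$ using \eqref{pt} and \eqref{op'}. A key identity from Lemma~\ref{lmnew} is that the terminal velocity components
\begin{equation*}
    v_x(\ttp) = \vpx^0 + \apb \cos\ttp\cdot \tpt(\ttp),\quad v_y(\ttp) = \vpy^0 + \apb \sin\ttp\cdot \tpt(\ttp)
\end{equation*}
satisfy $v_x^2(\ttp) + v_y^2(\ttp) = \vpb^2$, since $P$ reaches exactly speed $\vpb$ at $t=\tpt(\ttp)$. Substituting this into \eqref{op'} and inspecting the leading behavior for large $t$ shows that $\mathcal{R}_P(t)$ contains a disk of ``radius'' $\vpb t - C$ around a center within bounded distance of $\mathbf{x}_P^0$, where $C$ depends only on the initial data and $\apb$. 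Meanwhile $\mathcal{R}_E(t)$ is the disk of radius $\veb t$ around $\mathbf{x}_E^0$, so the triangle inequality gives $\mathcal{R}_E(t)\subseteq \mathcal{R}_P(t)$ whenever $\vpb t - C \ge \veb t + \|\mathbf{x}_P^0 - \mathbf{x}_E^0\|$. Because $\vpb > \veb$, this inequality holds for all $t\ge T_0 := (C + \|\mathbf{x}_P^0 - \mathbf{x}_E^0\|)/(\vpb - \veb)$, a finite time depending only on the initial data.

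Finally, I would extend the intermediate-value argument of Lemma~\ref{lm3} from the circle $\cc_P$ to the oval $\partial\mathcal{R}_P(t)$. For any fixed heading $\tte$ of $E$, define a continuous scalar $g(t,\tte)$ that is negative when $\mathbf{x}_E(t,\tte)$ lies strictly outside $\mathcal{R}_P(t)$ and non-negative once it lies inside. Since the two reachable sets are disjoint at small $t_\epsilon$ and satisfy $\cc_E(T_0)\subseteq \mathcal{R}_P(T_0)$ by the containment above, $g(t_\epsilon,\tte)<0$ and $g(T_0,\tte)\ge 0$, so by the intermediate value theorem there is a capture time $t'\in(t_\epsilon,T_0]$. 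By Lemma~\ref{lm4} an optimally playing $E$ moves at speed $\veb$ with constant heading, so covering all such $\tte\in[0,2\pi)$ suffices to conclude that $P$ captures $E$ in finite time.

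The main obstacle is the asymmetry of the oval: unlike the circle in Lemma~\ref{lm3}, the boundary of $\mathcal{R}_P(t)$ depends on direction through $\tpt(\ttp)$, and the pre-/post-max-speed regimes in \eqref{pt}/\eqref{op'} must be spliced so that the boundary is a single continuous closed curve and the signed-distance function $g$ is well defined. Establishing the asymptotic lower bound $\vpb t - C$ on the oval's ``radius'' uniformly in $\ttp$ is the technical point that requires the most care; once this is in hand, the remainder of the argument parallels Lemma~\ref{lm3}.
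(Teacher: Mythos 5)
Your proposal follows essentially the same route as the paper: both arguments show that the oval of points $P$ can reach has a minimal ``radius'' about a fixed center growing like $\vpb t - C$, so that for $\vpb>\veb$ the circle $\cc_E$ of radius $\veb t$ is eventually contained in it, and capture then follows by a continuity/intermediate-value argument in the spirit of Lemma~\ref{lm3}. The uniform lower bound you flag as the delicate step is exactly what the paper computes explicitly, evaluating $d_P^2(\ttp,t)$ about the center $\mathbf c_P'=(\xp^0+\vpb\vpx^0/\apb,\;\yp^0+\vpb\vpy^0/\apb)^\top$ and showing the minimum over directions equals $\vpb t-\frac{(\vpx^0)^2+(\vpy^0)^2+\vpb^2}{2\apb}$, so your plan is sound and matches the paper's proof.
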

\begin{proof} 
    From \eqref{tptp}, we can compute the minimum value of $\tpt(\ttp)$ as $\frac{\vpb-\sqrt{(\vpx^0)^2+(\vpy^0)^2}}{\apb}$, which is obtained when the acceleration direction $\ttp$ is the same as the initial velocity direction of $P$, i.e., $(\cos\ttp,\sin\ttp) = (\vpx^0,\vpy^0)/\|(\vpx^0,\vpy^0)\|$, and the maximum value of $\tpt(\ttp)$ as $\frac{\vpb+\sqrt{(\vpx^0)^2+(\vpy^0)^2}}{\apb}$, which is obtained  when the acceleration direction $\ttp$ is opposite to the initial velocity direction of $P$, i.e., $(\cos\ttp,\sin\ttp) = -(\vpx^0,\vpy^0)/\|(\vpx^0,\vpy^0)\|$. Therefore, when $t > \frac{\vpb+\sqrt{(\vpx^0)^2+(\vpy^0)^2}}{\apb}$, $P$ reaches its maximum speed and begins to move at a constant velocity, regardless of the acceleration direction.
    
    Define $\mathbf c_P'=(\xp^0+\frac{\vpb\vpx^0}{\apb},\yp^0+\frac{\vpb\vpy^0}{\apb})^\top$. When $t > \frac{\vpb+\sqrt{(\vpx^0)^2+(\vpy^0)^2}}{\apb}$, the square of the distance between a point whose position is characterized by \eqref{op'} and $\mathbf c_P'$ is
    \begin{equation}
        \begin{aligned}
             d_P^2(\ttp,t)
            &=((\vpx^0)^2+(\vpy^0)^2)(t-\frac{\vpb}{\apb})^2\\
            &\quad +\frac{1}{2}(2t-\tpt(\ttp))(t-\frac{\vpb}{\apb})\\&\qquad\cdot(\vpb^2-(\vpx^0)^2-(\vpy^0)^2-\apb^2\tpt^2(\ttp))\\&\quad +\frac{1}{4}\apb^2(2t\tpt(\ttp)-\tpt^2(\ttp))^2,
        \end{aligned}
        \label{ddd}
    \end{equation}
    where we used \eqref{tptp} to obtain
    \begin{multline*}
\vpx^0\cos\ttp+\vpy^0\sin\ttp=\\\frac{\vpb^2-(\vpx^0)^2-(\vpy^0)^2-\apb^2\tpt^2(\ttp)}{2\apb\tpt(\ttp)}.
    \end{multline*}
    Note that in~\eqref{ddd}, all terms related to $\ttp$ are represented through $\tpt(\ttp)$, and  \eqref{ddd} can be viewed as a function of $\tpt(\ttp)$ and $t$. When $t$ is fixed, we can find that the minimum value of $d_P(\ttp,t)$ is obtained when $\tpt(\ttp)$ obtains its maximum or minimum value by calculating the derivative of $d_P(\ttp,t)$. Substituting $\tpt(\ttp)=\frac{\vpb-\sqrt{(\vpx^0)^2+(\vpy^0)^2}}{\apb}$ and $\tpt(\ttp)=\frac{\vpb+\sqrt{(\vpx^0)^2+(\vpy^0)^2}}{\apb}$ into \eqref{ddd}, respectively, we can find that the corresponding $d_P(\ttp,t)$ are both equal to $\vpb t-\frac{(\vpx^0)^2+(\vpy^0)^2+\vpb^2}{2\apb}$. Thus when we consider $\mathbf c_P'$ as the center of the shape formed by the points $P$ can reach, the length of the shortest semi-axis of this shape is $d_{P\min}=\vpb t-\frac{(\vpx^0)^2+(\vpy^0)^2+\vpb^2}{2\apb}$, which grows with the rate $\vpb$. From Lemma~\ref{lm2}, we know that $\cc_E$ is centered at $(\xe,\ye)^\top$ with a radius of $\veb t$. Therefore, when $t$ is sufficiently large, the circle $\cc_E$ will be fully contained within the shape formed by the points $P$ can reach. At this time, regardless of $E$'s position on the circle $\cc_E$, $P$ is guaranteed to arrive at that position no later than $t$, and therefore $P$ is guaranteed to capture $E$.
\end{proof}

Next we propose the strategies for the case when $P$ cannot capture $E$ before reaching its maximum speed. Since the capture must occur at the intersection of $\cc_P'$ and $\cc_E$, we can combine \eqref{op'} and \eqref{et} to obtain the coordinates of the capture point. Specifically, the capture time satifies
    \begin{equation}
        (\vpb^2-\veb^2)t^2+2(\px\qx+\py\qy)t+\qx^2+\qy^2=0,
    \label{t2t}
\end{equation}
where
\begin{equation}
        \begin{aligned}
        \px&=\vpx^0+\apb\cos\ttp\cdot\tpt(\ttp),\\\py&=\vpy^0+\apb\sin\ttp\cdot\tpt(\ttp),\\\qx&=\xp^0-\xe^0-\frac{1}{2}\apb\cos\ttp\cdot\tpt^2(\ttp),\\\qy&=\yp^0-\ye^0-\frac{1}{2}\apb\sin\ttp\cdot\tpt^2(\ttp),
        \end{aligned}
    \label{pq}
\end{equation}
and $\px$ and $\py$ satisfy
\begin{equation}
    \px^2+\py^2=\vpb^2.
    \label{eq:relationofp}
\end{equation}
Note that \eqref{t2t} is a quadratic function of $t$, and every coefficient is a function of $\ttp$. Therefore, by solving~\eqref{t2t}, we obtain a formula for $t$ in terms of $\ttp$ as
    \begin{equation}
        t=\frac{g(\bx^0,\ttp)-h(\bx^0,\ttp)}{\vpb^2-\veb^2},
    \label{tpm}
    \end{equation}
in which
\begin{equation}
        \begin{aligned}
        h(\bx^0,\ttp)&=\px\qx+\py\qy,\\g(\bx^0,\ttp)&=\sqrt{h^2(\bx^0,\ttp)-(\vpb^2-\veb^2)(\qx^2+\qy^2)}.
        \end{aligned}
    \label{gh}
\end{equation}
In other words, once the acceleration direction $\ttp$ of $P$ is fixed, the capture time $t_f$ under this strategy is determined. Moreover, since $E$'s objective is to delay capture as much as possible, to determine the strategies for $P$ and $E$, we must first find the $\ttp$ that maximizes $t$ in~\eqref{tpm}. Then we obtain the optimization problem
\begin{equation}
        \max_{\ttp}\quad\frac{g(\bx^0,\ttp)-h(\bx^0,\ttp)}{\vpb^2-\veb^2},
    \label{oppr}
\end{equation}
and the optimal solution $\ttp^*$ and the optimal value $t_f$ of the optimization problem \eqref{oppr} correspond to the acceleration direction of $P$  and the capture time, respectively.

Note that the objective function in the optimization problem~\eqref{oppr} is a periodic function of $\ttp$ with a period of $2\pi$, and the definition of $g(\mathbf x^0,\ttp)$ in \eqref{gh} implicitly requires $w(\mathbf x^0,\ttp)\ge0$, where
\begin{equation}
    w(\mathbf x^0,\ttp)=h^2(\bx^0,\ttp)-(\vpb^2-\veb^2)(\qx^2+\qy^2).
\label{eqw}
\end{equation}
Although a rigorous proof is not available yet, we conjecture that the objective function in~\eqref{oppr} is unimodal over a connected domain in one period, and the optimal solution can be obtained using the ternary search algorithm. To do so, we need to determine the feasible range of $\ttp$ in~\eqref{oppr}, i.e., search for two zeros of $w(\mathbf x^0,\ttp)$ within one period. 
Specifically, we first find $\ttp^-$ and $\ttp^+$ on the interval $[0,2\pi]$ such that $w(\mathbf x^0,\ttp^-) < 0$ and $w(\mathbf x^0,\ttp^+) > 0$, respectively. Starting from $\ttp = 0$, we calculate the function value $w(\mathbf x^0,\ttp)$ at uniformly spaced values with a fixed step size $h = \pi/500$. As soon as a $\theta_{P0}$ is found such that the corresponding function value satisfies $w(\mathbf x^0,0)\cdot w(\mathbf x^0,\theta_{P0})<0$, the search terminates and $\theta_{P0}$ is returned. If no such value is found over the entire interval, the step size is reduced by a factor of $10$, and the process is repeated. This iteration continues until a $\theta_{P0}$ is found such that $w(\mathbf x^0,0)\cdot w(\mathbf x^0,\theta_{P0})<0$. Then, $\ttp=0$ and $\ttp=\theta_{P0}$ correspond to two evaluations of $w(\mathbf x^0,\ttp)$ with opposite signs. Denote the one at which the function value is negative by $\ttp^-$, and the one at which the function value is positive by $\ttp^+$. Suppose $\ttp^-<\ttp^+$ (or $\ttp^->\ttp^+$), then we can use the bisection method to obtain the two zeros $\theta_{P1}$ and $\theta_{P2}$ of $w(\mathbf x^0,\ttp)$ over the intervals $[\ttp^-, \ttp^+]$ (or $[\ttp^--2\pi, \ttp^+]$) and $[\ttp^+, \ttp^- + 2\pi]$ (or $[\ttp^+, \ttp^- ]$), respectively. Finally, we apply the ternary search algorithm over the domain $[\theta_{P1}, \theta_{P2}]$. In our simulations in Section~\ref{AS}, we employ the above procedure to solve~\eqref{oppr}.

By solving~\eqref{oppr}, we obtain the optimal solution $\ttp^*$ and the optimal value $t_f$ of the optimization problem \eqref{oppr}, which correspond to the acceleration direction of $P$  and the capture time, respectively. By substituting $\ttp^*$ into \eqref{op'}, we obtain the capture point $\bx_f=(x_f,y_f)^\top$ as
\begin{equation}
        \begin{aligned}
            x_f&=\xp^0+\vpx^0t_f+\frac{1}{2}\apb\cos\ttp^*(2t_f\tpt(\ttp^*)-\tpt^2(\ttp^*)),\\y_f&=\yp^0+\vpy^0t_f+\frac{1}{2}\apb\sin\ttp^*(2t_f\tpt(\ttp^*)-\tpt^2(\ttp^*)).
        \end{aligned}
    \label{xff}
\end{equation}
According to Lemma~\ref{lm4}, $E$ needs to move at maximum speed in a fixed direction towards the capture point. Therefore, the strategies of $P$ and $E$ for the case when $P$ cannot capture $E$ before reaching its maximum speed are
    \begin{equation}
        \begin{aligned}
            \cos\tte^*&=\frac{x_f-\xe^0}{\sqrt{(x_f-\xe^0)^2+(y_f-\ye^0)^2}},\\\sin\tte^*&=\frac{y_f-\ye^0}{\sqrt{(x_f-\xe^0)^2+(y_f-\ye^0)^2}},\\\ap^*&=\left\{
            \begin{aligned}
                \apb\quad\vp<\vpb,\\0\quad\vp=\vpb,
            \end{aligned}
            \right.\\\ve^*&=\veb,
        \end{aligned}
    \label{opst2}
    \end{equation}
    and $\ttp^*$ is the optimal solution to~\eqref{oppr}.

\subsection{Algorithm}\label{Alg}

So far we have presented the strategies for $P$ and $E$ when capture can occur both before and after $P$ reaches its maximum speed. Therefore, in order to derive strategies under different initial conditions, we need to determine whether $P$ can capture $E$ before reaching its maximum speed.

In Lemma \ref{lmnew}, we provide the formula for $\tpt(\ttp)$ and its physical meaning, which will serve as the condition for determining whether $P$ can capture $E$ before reaching its maximum speed. Under the strategie \eqref{ops1}, based on the current states of $P$ and $E$, we can compute the capture time $t_f$ and the acceleration direction $\ttp^*$ for $P$. By substituting $\ttp^*$ into  \eqref{tptp}, we can obtain the time $\tpt(\ttp^*)$ for $P$ to reach the maximum speed in the current acceleration direction $\ttp^*$. Then, we compare $\tpt(\ttp^*)$ with the capture time $t_f$. If $t_f\leq\tpt(\ttp^*)$, then $P$ can capture $E$ before reaching its maximum speed, and thus the strategies in~\eqref{ops1} are valid. If otherwise $t_f>\tpt(\ttp^*)$, then $P$ cannot capture $E$ before reaching its maximum speed, and the strategies in~\eqref{ops1} are not valid. In this case, $P$ and $E$ must apply the strategies in~\eqref{opst2}.

With the strategies and the condition for determining their validity, we present the entire algorithm for the PE game. In Algorithm~\ref{alg1}, the inputs are the current state variables $\mathbf x^0$ of both $P$ and $E$, as well as their respective constraints $\apb$, $\vpb$, and $\veb$. 
First, we check whether $P$ is able to capture $E$ before reaching its maximum speed. We compute the set of solutions $\mathcal T$ of \eqref{queq} that satisfy $t\ge\frac{2\vpb}{\apb}$ on Line \ref{alsolveT}. If $\mathcal T$ is nonempty, then we select the smallest $t \in \mathcal T$ as the provisional capture time $t_f$ on Line \ref{algettf}. Then, using \eqref{ops1}, we determine the corresponding $\ttp^*$ and compute $\tpt(\ttp^*)$ by \eqref{tptp} on Line \ref{algetst1} and \ref{algetttheta}. If $t_f \leq \tpt(\ttp^*)$, then $P$ can capture $E$ before reaching its maximum speed and  $t_f$ is the capture time. In this case, the strategies $\ap^*$, $\ttp^*$, $\ve^*$, and $\tte^*$ for $P$ and $E$ at the current state $\mathbf x^0$ are given by~\eqref{ops1}. If $t_f > \tpt(\ttp^*)$, then $t_f$ does not satisfy the condition for capture before $P$ reaching its maximum speed, and it must be removed from the set $\mathcal T$ on Line \ref{alremove}; we then repeat the above procedure with the next smallest element $t$ of $\mathcal T$. When $\mathcal T$ is empty, which indicates that $P$ cannot capture $E$ before reaching its maximum speed, we must apply \eqref{opst2} to derive the strategies for the current state $\mathbf x^0$ on Line~\ref{alst21} and~\ref{alst22}.

\begin{algorithm}[h]
\caption{Solving for the Strategies.}\label{alg:alg1}
\begin{algorithmic}[1]
\REQUIRE $\mathbf x^0,\apb,\vpb,\veb$
\ENSURE $\ap^*,\ttp^*,\ve^*,\tte^*$
\STATE solve \eqref{queq} and obtain $\mathcal T=\{t>\frac{2\veb}{\apb}\,|\,\Gamma(t)=0\}$\label{alsolveT}
\WHILE{$\mathcal T\neq\emptyset$}
\STATE $t_f=\min T$\label{algettf}
\STATE calculate $\ap^*,\ttp^*,\ve^*,\tte^*$ by \eqref{ops1}\label{algetst1}
\STATE calculate $\tpt(\ttp^*)$ by \eqref{tptp}\label{algetttheta}
\IF{$t_f\le\tpt(\ttp^*)$}
\RETURN $\ap^*,\ttp^*,\ve^*,\tte^*$ at current time
\ELSE
\STATE $\mathcal T=\mathcal T\setminus\{t_f\}$\label{alremove}
\ENDIF
\ENDWHILE
\STATE obtain $\ttp^*$ and $t_f$ by solving \eqref{oppr}\label{alst21}
\STATE calculate $\ap^*,\ttp^*,\ve^*,\tte^*$ by \eqref{opst2}\label{alst22}
\RETURN $\ap^*,\ttp^*,\ve^*,\tte^*$ at current time
\end{algorithmic}
\label{alg1}
\end{algorithm}

\subsection{Optimality of the strategies}\label{subsection:optimal}
We have already obtained the strategies for $P$ and $E$ no matter $P$ can capture $E$ before $P$ reaches its maximum speed or not, but we still need to verify the optimality of these strategies in the sense of Nash equilibrium using the HJI equation~\eqref{hji}.
\begin{theorem}[Optimality of strategies in the sense of Nash equilibrium]
\label{thm3}
    The value function \eqref{valuef} satisfies HJI equation~\eqref{hji}, which means the strategies \eqref{ops1} and \eqref{opst2} for $P$ and $E$ of this PE game are optimal in the sense of Nash equilibrium.
\end{theorem}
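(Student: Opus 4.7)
The plan is to verify the HJI equation~\eqref{hji} separately in the two cases handled by Algorithm~\ref{alg1}. In both cases the value function is simply the capture time under optimal play, $V(\bx) = t_f(\bx)$, so the task reduces to computing the six partial derivatives of $t_f$ with respect to $\xp,\yp,\vpx,\vpy,\xe,\ye$ and substituting the optimal strategies from~\eqref{ops1} or~\eqref{opst2} into the left-hand side of~\eqref{hji}.

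For Case 1, $t_f$ is defined implicitly by $\Gamma(t_f)=0$ together with $t_f \ge 2\veb/\apb$, so I would apply implicit differentiation: $\partial V/\partial\xp = -(\partial\Gamma/\partial\xp)/(\partial\Gamma/\partial t_f)$, and likewise for the other five coordinates. A convenient observation from~\eqref{ops1} is that $\cos\ttp^* = \cos\tte^*$ and $\sin\ttp^* = \sin\tte^*$, with both unit-vector components pointing from $(\xe^0,\ye^0)$ toward the tangency point. When these are substituted into~\eqref{hji}, the common factor $1/(\partial\Gamma/\partial t_f)$ can be pulled out, and I expect the remaining numerator to collapse to $-\partial\Gamma/\partial t_f$ after grouping the $\xp,\yp$ terms against the $\xe,\ye$ terms and pairing the $\vpx,\vpy$ terms against the acceleration terms. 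The identity should then close via $\Gamma(t_f)=0$, which lets us rewrite $(\tfrac12\apb t_f^2-\veb t_f)^2$ in terms of the position differences appearing in the gradient.

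For Case 2, the value function equals the optimal value of~\eqref{oppr}, namely $V(\bx) = (g(\bx,\ttp^*)-h(\bx,\ttp^*))/(\vpb^2-\veb^2)$, where $\ttp^*$ is an interior maximizer. The envelope theorem then produces each partial of $V$ as the partial of the objective with $\ttp^*$ held fixed, eliminating any need to differentiate $\ttp^*$ through the state. Because the dynamics of $P$ switch at $t=\tpt(\ttp^*)$, I would check~\eqref{hji} separately on the two sub-phases: before the switch, $\vp<\vpb$ and $\ap^*=\apb$; after the switch, $\ap^*=0$ and only the constant-velocity drift terms survive. The geometric identity from~\eqref{xff} that the capture point lies simultaneously on $\cc_E$ and on the boundary of $P$'s reachable oval, together with the constraint~\eqref{eq:relationofp}, should drive the cancellation.

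The hardest step will be the algebra in Case 2: the value function is only available implicitly through $\tpt(\ttp^*)$, $g$, and $h$, and the closed forms in~\eqref{pq}--\eqref{gh} are bulky. I expect the cleanest route is to rewrite every quantity at time $t_f$ in terms of the displacement vector from the capture point $\bx_f$ to the current positions of $P$ and $E$ together with the terminal velocity components $\px,\py$, since the directions of $\ttp^*$ and $\tte^*$ are parallel to these vectors. If the envelope-based computation becomes unwieldy, a backup is to verify the infinitesimal saddle-point condition directly: show that $(\ap^*,\ttp^*,\ve^*,\tte^*)$ is a saddle point of the Hamiltonian~\eqref{zhekezaban} with costates proportional to $\nabla V$, which together with the property that $H$ vanishes along an optimal trajectory of a fixed-terminal-set, free-time problem yields~\eqref{hji}.
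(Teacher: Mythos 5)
Your plan follows essentially the same route as the paper: in Case 1 the paper also takes $V=t_f$ defined implicitly by $\Gamma(t_f)=0$ and obtains the six partials by implicit differentiation (its $d_x,d_y,D$ are exactly your $-\partial\Gamma/\partial(\cdot)$ over $\partial\Gamma/\partial t_f$ quantities), and closes the computation with $\Gamma(t_f)=0$ exactly as you anticipate; in Case 2 the paper writes out the full chain rule through $\ttp^*$ and then kills all $\partial\ttp^*/\partial(\cdot)$ terms using the stationarity condition $\partial t/\partial\ttp=0$ at $\ttp^*$, which is precisely your envelope-theorem argument in expanded form. One caution on Case 2: ``holding $\ttp^*$ fixed'' does \emph{not} freeze everything, because $\tpt(\ttp^*)$ in \eqref{pq} still depends explicitly on $\vpx,\vpy$ at fixed $\ttp^*$; these derivatives produce the paper's $R_1,R_2$ terms, and the cancellation hinges on the identity $R_1\cos\ttp^*+R_2\sin\ttp^*=-1$ together with \eqref{eq:relationofp} and \eqref{huajian}. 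If you apply the envelope theorem only to the $\ttp^*$-dependence and keep these velocity terms, your computation goes through; if you drop them, the $\vpx,\vpy$ partials are wrong and the HJI check fails.

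The genuine omission is the switching surface. The value function is defined piecewise by two different formulas (the root of \eqref{queq} when capture occurs before $P$ reaches $\vpb$, and the optimal value of \eqref{oppr} otherwise), so verifying \eqref{hji} inside each region is not enough: you must also show the two expressions agree where the regimes meet, i.e.\ that $V$ is continuous across $t_f=\tpt(\ttp^*)$. The paper devotes the last part of the appendix to this, substituting $t_f=\tpt(\ttp^*)$ into \eqref{t2t} and showing, via \eqref{pq}, \eqref{eq:relationofp} and \eqref{huajian}, that it reduces exactly to \eqref{queq}, so the capture time (hence $V$) matches from both sides. Your proposal never addresses this, and without it the piecewise-defined $V$ has not been shown to be a single well-defined continuous solution of the HJI equation, which is what the Nash-optimality conclusion rests on. (Your idea of checking the two sub-phases of Case 2 separately, before and after $P$ hits $\vpb$, is reasonable and if anything more explicit than the paper, but it does not substitute for the continuity check between the Case 1 and Case 2 value functions.)
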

\begin{proof}
    The proof is postponed to the appendix.
\end{proof}


\section{Simulation}\label{AS}

\begin{figure*}[!t]
\centering
\subfigure[$P$ and $E$ both use the optimal strategies.]{\includegraphics[width=2.3in]{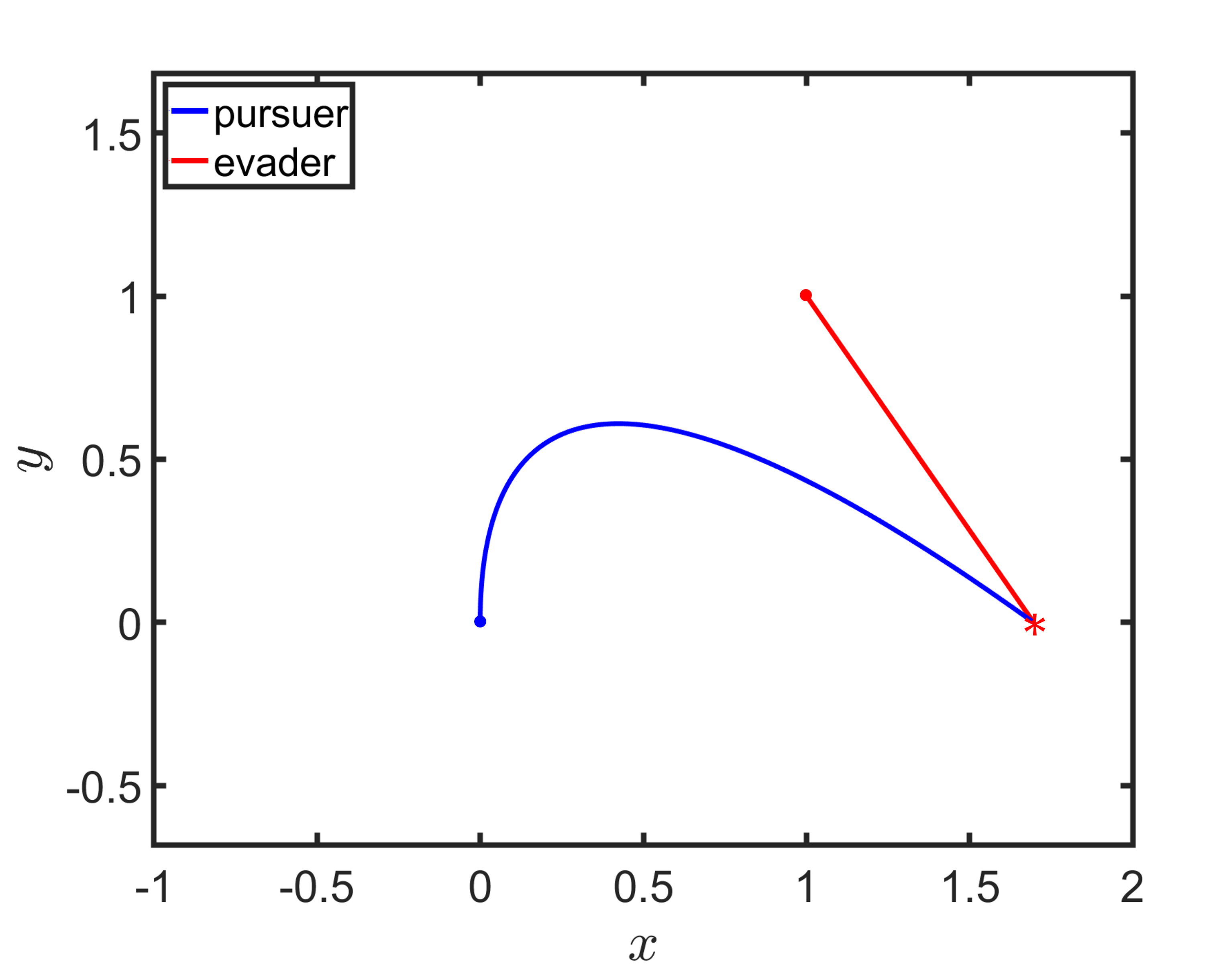}}
\subfigure[$P$ uses the pure-pursuit strategy while $E$ uses the optimal strategy.]{\includegraphics[width=2.3in]{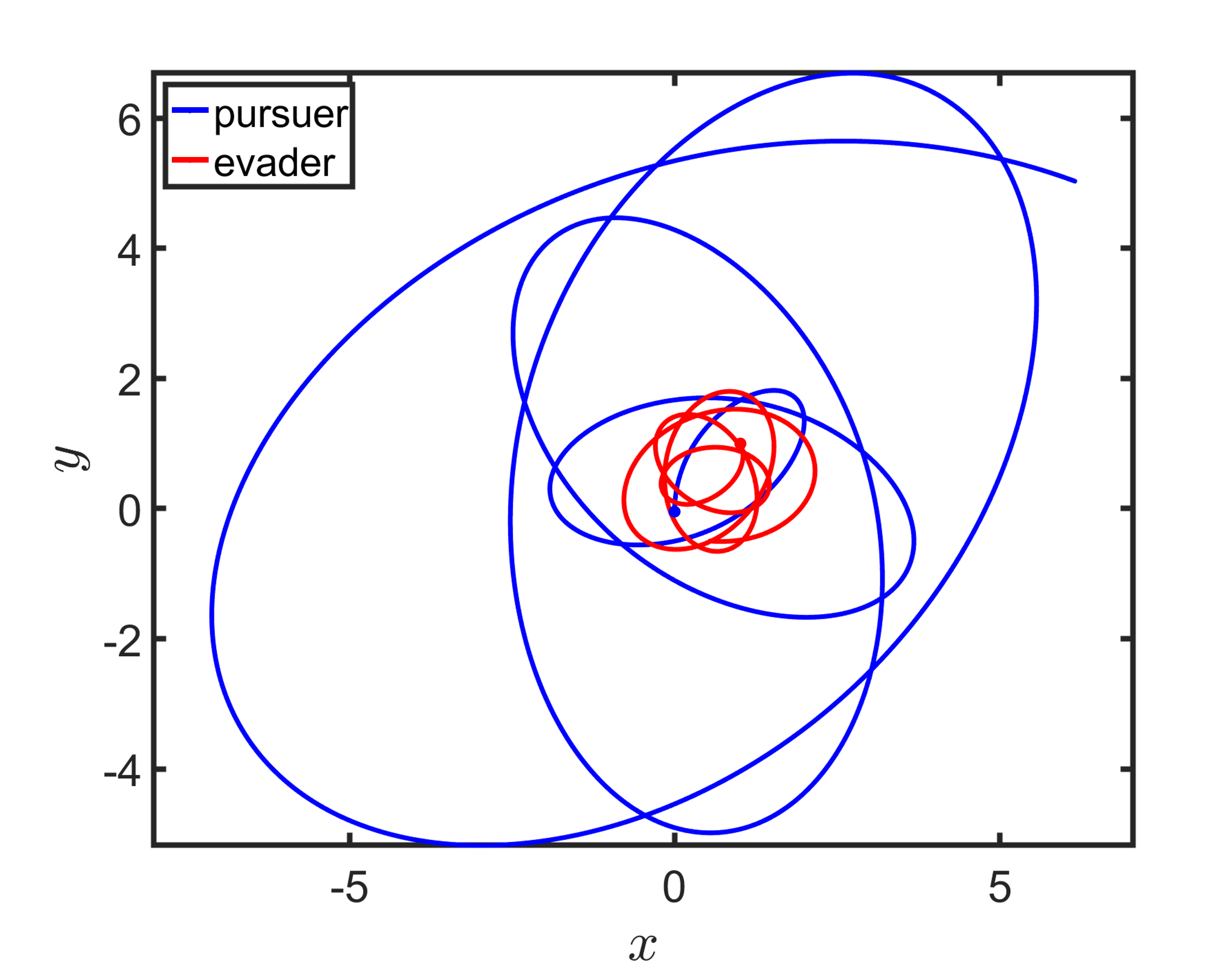}}
\subfigure[$P$ uses the optimal strategy while $E$ uses the pure-evasion strategy.]{\includegraphics[width=2.3in]{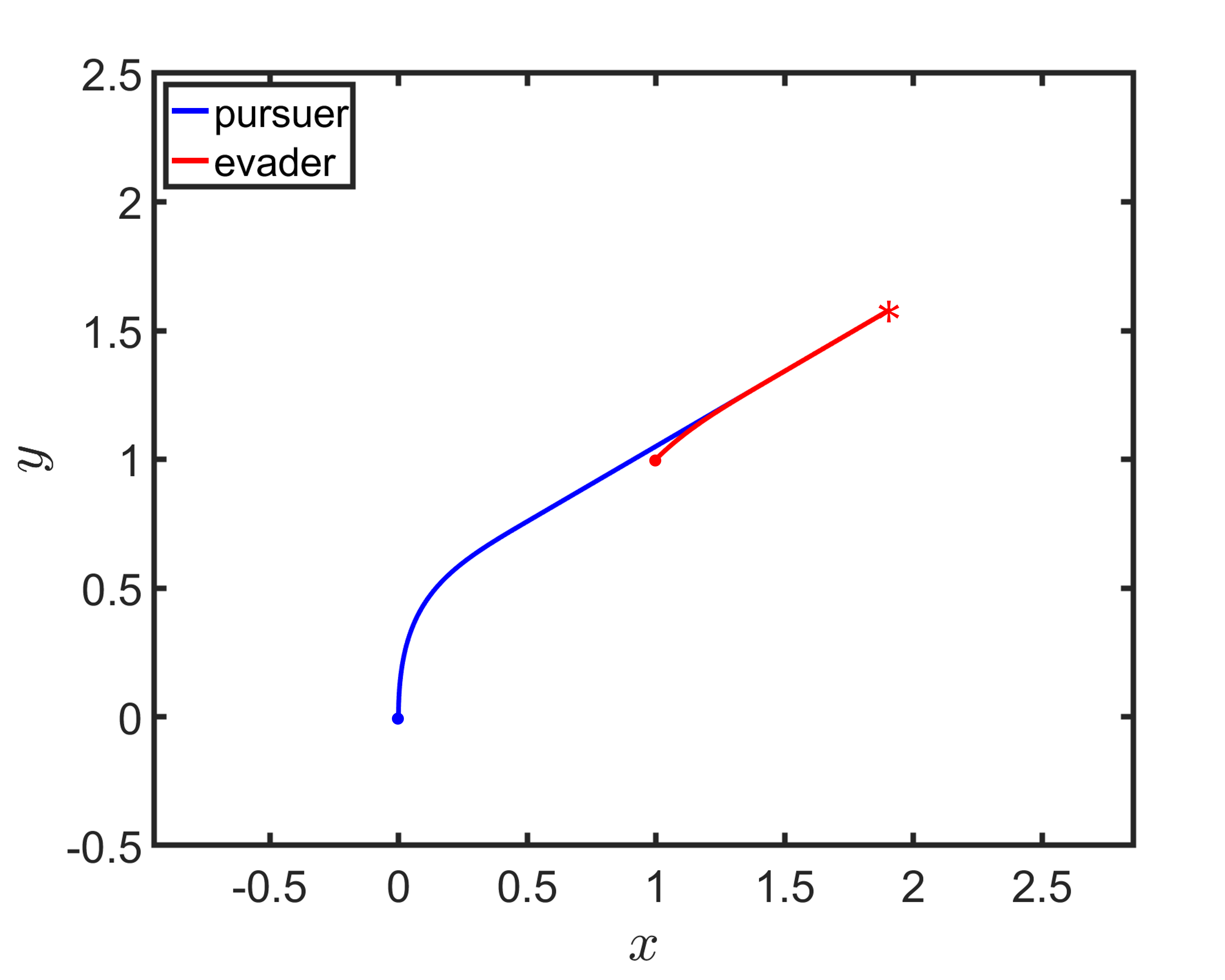}}
\caption{Scenario I: $\mathbf x^0=(0,0,0,1,1,1)^\top$, $\vpb=10$, $\apb=1$, $\veb=0.5$.}
\label{fig3}
\end{figure*}
\begin{figure*}[!t]
\centering
\subfigure[$P$ and $E$ both use the optimal strategies.]{\includegraphics[width=2.3in]{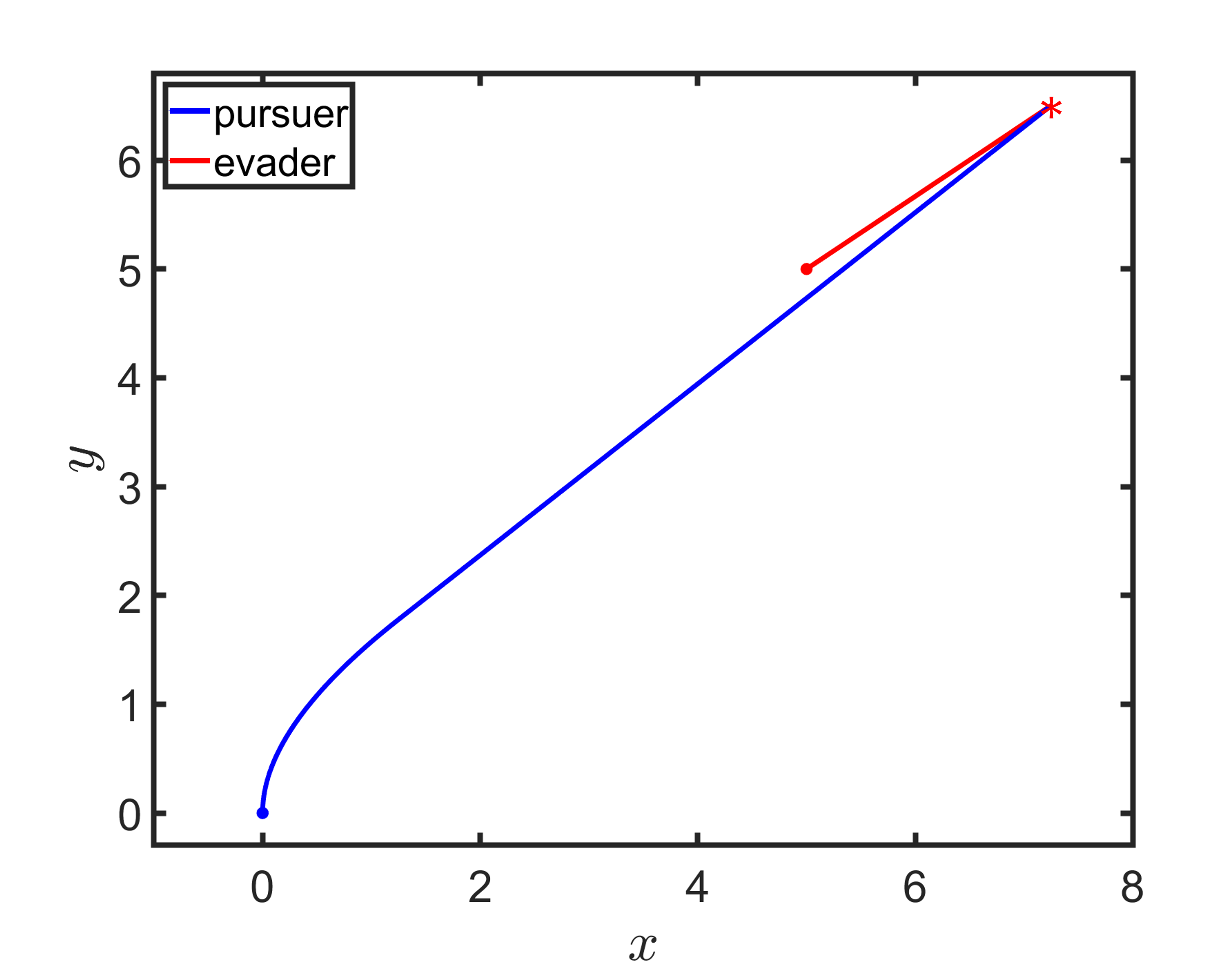}}
\subfigure[$P$ uses the pure-pursuit strategy while $E$ uses the optimal strategy.]{\includegraphics[width=2.3in]{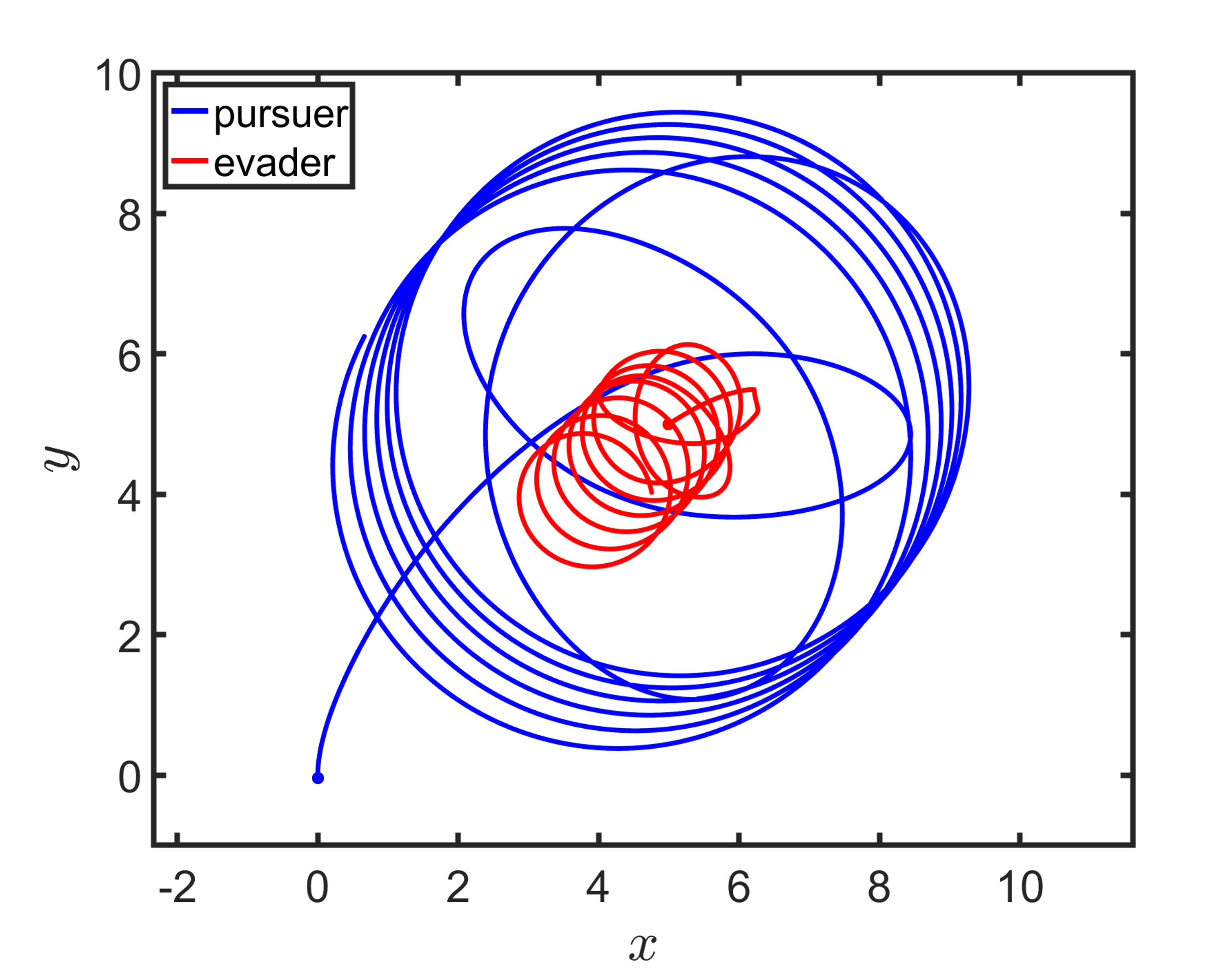}}
\subfigure[$P$ uses the optimal strategy while $E$ uses the pure-evasion strategy.]{\includegraphics[width=2.3in]{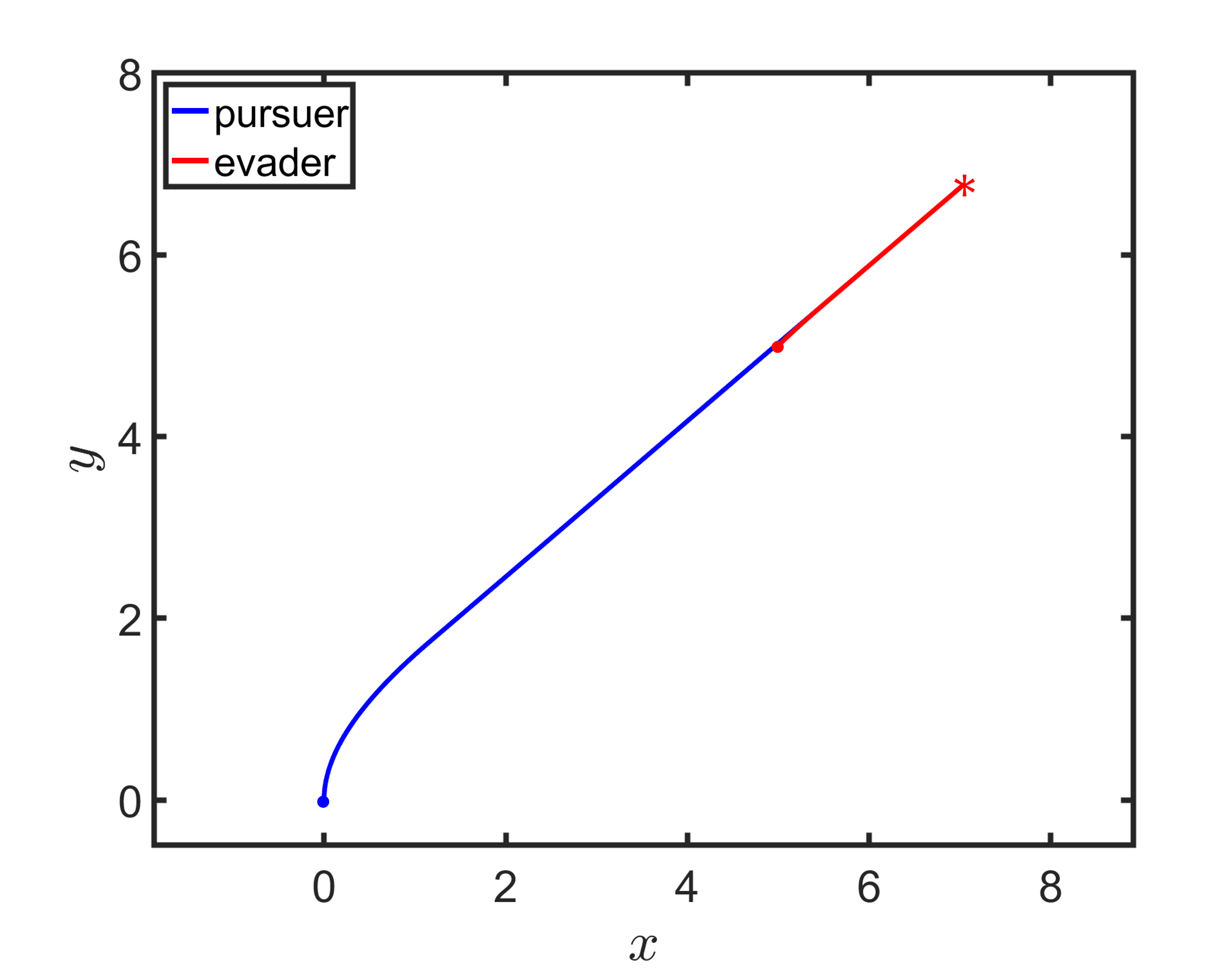}}
\caption{Scenario II: $\mathbf x^0=(0,0,0,1,5,5)^\top$, $\vpb=2$, $\apb=1$, $\veb=0.5$.}
\label{fig4}
\end{figure*}

In this section, we present some simulations to illustrate the effectiveness of our proposed strategies. All simulations are produced using MATLAB R2023b. The hardware configuration is as follows: CPU: 13th Gen Intel® Core™ i9-13980HX @ 2.20 GHz, Memory: 16.0 GB RAM.

Since we have proposed two different strategies based on whether $P$ can capture $E$ before reaching its maximum speed, we provide two distinct simulation scenarios corresponding to these two strategies. In Scenario I, the initial state $\mathbf x^0=(0,0,0,1,1,1)^\top$, and $\vpb=10$, $\apb=1$, $\veb=0.5$, where $P$ can capture $E$ before reaching its maximum speed under the optimal strategies. In Scenario II, the initial state $\mathbf x^0=(0,0,0,1,5,5)^\top$, and $\vpb=2$, $\apb=1$, $\veb=0.5$, where $P$ cannot capture $E$ before reaching its maximum speed under the optimal strategies. The simulation results of the optimal strategies in these two scenarios are shown in Fig. \ref{fig3}(a) and Fig. \ref{fig4}(a), respectively.

To illustrate the advantages of our proposed strategies, we adopt the pure-pursuit strategy and the pure-evasion strategy for comparison. When $P$ uses the pure-pursuit strategy, its acceleration direction always points toward $E$'s current position. When $E$ uses the pure-evasion strategy, its velocity direction is always the same as the line starting from $P$ and pointing to $E$. The simulation results when $P$ uses the pure-pursuit strategy while $E$ uses the optimal strategy and when $P$ uses the optimal strategy while $E$ uses the pure-evasion strategy in these two scenarios are shown in Fig.~\ref{fig3}(b), Fig.~\ref{fig3}(c) and Fig.~\ref{fig4}(b), Fig.~\ref{fig4}(c), respectively. The capture times when $P$ and $E$ use different strategies are reported in Table \ref{table}, which validates that the proposed strategies perform better.

\begin{table}
\caption{Capture Times}
\label{table}
\setlength{\tabcolsep}{3pt}
\begin{tabular}{|p{70pt}|p{70pt}|p{45pt}|p{45pt}|}
\hline
$P$'s Strategy& 
$E$'s Strategy& 
Capture Time In Scenario I& 
Capture Time In Scenario II\\
\hline
the optimal strategy& 
the optimal strategy& 
2.437& 
5.407\\
\hline
the optimal strategy& 
the pure-evasion \par strategy& 
2.155& 
5.397\\
\hline
the pure-pursuit \par strategy& 
the optimal strategy& 
$+\infty$& 
$+\infty$\\
\hline
\end{tabular}
\label{tab1}
\end{table}

\section{Conclusion}\label{CC}

We study a pursuit-evasion game between a double integrator-driven pursuer and a single integrator-driven evader, where the pursuer has a constraint on the magnitude of its velocity. If the pursuer is able to capture the evader before reaching its maximum speed, then the optimal strategy for the pursuer is to apply maximum acceleration along a  fixed direction, while the evader moves in a fixed direction at maximum speed, and both players move toward the capture point. And we provide specific strategies for the purser and the evader using geometric methods. If the pursuer cannot capture the evader before reaching its maximum speed, then the optimal strategy for the pursuer is to accelerate with the maximum acceleration along a fixed direction until reaching the maximum speed, and then continues moving at this speed in the same direction, while the evader moves in a fixed direction at maximum speed, and both players move toward the capture point. The capture point can be solved using numerical optimization methods. The optimality of these strategies in the sense of Nash equilibrium is verified using the HJI equation. Simulation results show that the proposed strategies are indeed the optimal strategies in the sense of Nash equilibrium. The strategies provide a feasible solution to pursuit-evasion problems in complex real-world scenarios such as drone tracking and autonomous driving. Future research could further extend this work to three-dimensional space or multi-agent collaborative scenarios.

\appendix

\section*{Proof of Theorem \ref{thm3}}
In this PE game, we utilize two different strategies under different initial conditions, depending on whether $P$ can capture $E$ before reaching its maximum speed. To prove that the strategies for this game are optimal in the sense of Nash equilibrium, we need to demonstrate that the value function \eqref{valuef} under the strategies satisfies the HJI equation~\eqref{hji}. 
Then, we must also demonstrate that when the initial conditions change continuously, leading to a switch in strategies, the value function \eqref{valuef} remains continuous. We note that in order to establish the optimality of a strategy in the sense of Nash equilibrium, the HJI equation~\eqref{hji} must hold for all states. Therefore, in the following proof, the initial state $\mathbf{x}^0$ will be replaced by a generic state $\mathbf{x}$ at any time. 

    First, we demonstrate the optimality of strategies \eqref{ops1} in the sense of Nash equilibrium, where the value function~\eqref{valuef} is given by $V=t_f$. To verify the HJI equation~\eqref{hji}, we need the partial derivatives of $V$ with respect to each state variable. Since $V=t_f$ is the solution to \eqref{queq}, we perform implicit differentiation on both sides of \eqref{queq} and obtain
    \begin{equation}
        \begin{aligned}
            \frac{\partial V}{\partial\xp}&=\frac{d_x}{D},\quad\frac{\partial V}{\partial\yp}=\frac{d_y}{D},\\\frac{\partial V}{\partial\xe}&=-\frac{d_x}{D},\quad\frac{\partial V}{\partial\ye}=-\frac{d_y}{D},\\\frac{\partial V}{\partial\vpx}&=\frac{d_xt_f}{D},\quad\frac{\partial V}{\partial\vpy}=\frac{d_yt_f}{D},
        \end{aligned}
        \label{h1}
    \end{equation}
    where
    \begin{equation}
        \begin{aligned}
            d_x&=\xp-\xe+\vpx t_f,\\d_y&=\yp-\ye+\vpy t_f,\\D&=-d_x\vpx-d_y\vpy+(\frac{1}{2}\apb t_f^2-\veb t_f)(\apb t_f-\veb).
        \end{aligned}
    \label{h2}
    \end{equation}
    Notice that
    \begin{equation}
        \begin{aligned}
            &d_x\cos\ttp^*+d_y\sin\ttp^*\\=&\frac{(\xp-\xe+\vpx t_f)^2+(\yp-\ye+\vpy t_f)^2}{\veb t_f-\frac{1}{2}\apb t_f^2}\\=&\frac{(\frac{1}{2}\apb t_f^2-\veb t_f)^2}{\veb t_f-\frac{1}{2}\apb t_f^2}\\=&\veb t_f-\frac{1}{2}\apb t_f^2,
        \end{aligned}
    \label{h3}
    \end{equation}
    where we used \eqref{queq} in the second equality. Substituting~\eqref{h1}, \eqref{h2}, and \eqref{h3} into the HJI equation~\eqref{hji}, we obtain
    \begin{equation*}
        \begin{aligned}
            &\frac{\partial V}{\partial\xp}\vpx+\frac{\partial V}{\partial\yp}\vpy+\frac{\partial V}{\partial\xe}\ve^*\cos\tte^*+\frac{\partial V}{\partial\ye}\ve^*\sin\tte^*\\&+\frac{\partial V}{\partial\vpx}\ap^*\cos\ttp^*+\frac{\partial V}{\partial\vpy}\ap^*\sin\ttp^*+1\\=&\frac{d_x\vpx+d_y\vpy-d_x\veb\cos\tte^*-d_y\veb\sin\tte^*}{D}\\&+\frac{d_xt_f\apb\cos\ttp^*+d_yt_f\apb\sin\ttp^*}{D}+1\\=&\frac{d_x\vpx+d_y\vpy+(d_x\cos\ttp^*+d_y\sin\ttp^*)(\apb t_f-\veb)}{D}+1\\=&\frac{d_x\vpx+d_y\vpy-(\frac{1}{2}\apb t_f^2-\veb t_f)(\apb t_f-\veb)}{D}+1=0,
        \end{aligned}
    \end{equation*}
    where we used the strategies given in~\eqref{ops1}. 
    Thus the value function $V=t_f$ satisfies HJI equation~\eqref{hji}, which means the strategies in~\eqref{ops1} are optimal in the sense of Nash equilibrium.

    In the following, we demonstrate the optimality of strategies~\eqref{opst2} in the sense of Nash equilibrium. 
    According to \eqref{tpm}, we know that $t_f$ depends on $\px$, $\py$, $\qx$ and $\qy$ defined in~\eqref{pq}. Therefore, we first compute the partial derivatives of them with respect to each state variable as follows. For $\px$ we have
    \begin{equation}
        \begin{aligned}
            \frac{\partial\px}{\partial\xp}&=\frac{\partial\px}{\partial\ttp^*}\cdot\frac{\partial\ttp^*}{\partial\xp},\quad\frac{\partial\px}{\partial\yp}=\frac{\partial\px}{\partial\ttp^*}\cdot\frac{\partial\ttp^*}{\partial\yp},\\
            \frac{\partial\px}{\partial\xe}&=\frac{\partial\px}{\partial\ttp^*}\cdot\frac{\partial\ttp^*}{\partial\xe},\quad\frac{\partial\px}{\partial\ye}=\frac{\partial\px}{\partial\ttp^*}\cdot\frac{\partial\ttp^*}{\partial\ye},\\
            \frac{\partial\px}{\partial\vpx}&=\frac{\partial\px}{\partial\ttp^*}\cdot\frac{\partial\ttp^*}{\partial\vpx}+R_1\cos\ttp^*+1,\\
            \frac{\partial\px}{\partial\vpy}&=\frac{\partial\px}{\partial\ttp^*}\cdot\frac{\partial\ttp^*}{\partial\vpy}+R_2\cos\ttp^*,
        \end{aligned}
    \label{pxpr}
    \end{equation}
    for $\py$ we have
    \begin{equation}
        \begin{aligned}
            \frac{\partial\py}{\partial\xp}&=\frac{\partial\py}{\partial\ttp^*}\cdot\frac{\partial\ttp^*}{\partial\xp},\quad\frac{\partial\py}{\partial\yp}=\frac{\partial\py}{\partial\ttp^*}\cdot\frac{\partial\ttp^*}{\partial\yp},\\
            \frac{\partial\py}{\partial\xe}&=\frac{\partial\py}{\partial\ttp^*}\cdot\frac{\partial\ttp^*}{\partial\xe},\quad\frac{\partial\py}{\partial\ye}=\frac{\partial\py}{\partial\ttp^*}\cdot\frac{\partial\ttp^*}{\partial\ye},\\
            \frac{\partial\py}{\partial\vpx}&=\frac{\partial\py}{\partial\ttp^*}\cdot\frac{\partial\ttp^*}{\partial\vpx}+R_1\sin\ttp^*,\\
            \frac{\partial\py}{\partial\vpy}&=\frac{\partial\py}{\partial\ttp^*}\cdot\frac{\partial\ttp^*}{\partial\vpy}+R_2\sin\ttp^*+1,
        \end{aligned}
    \label{pypr}
    \end{equation}
    for $\qx$ we have
    \begin{equation}
        \begin{aligned}
            \frac{\partial\qx}{\partial\xp}&=\frac{\partial\qx}{\partial\ttp^*}\cdot\frac{\partial\ttp^*}{\partial\xp}+1,\quad\frac{\partial\qx}{\partial\yp}=\frac{\partial\qx}{\partial\ttp^*}\cdot\frac{\partial\ttp^*}{\partial\yp},\\
            \frac{\partial\qx}{\partial\xe}&=\frac{\partial\qx}{\partial\ttp^*}\cdot\frac{\partial\ttp^*}{\partial\xe}-1,\quad\frac{\partial\qx}{\partial\ye}=\frac{\partial\qx}{\partial\ttp^*}\cdot\frac{\partial\ttp^*}{\partial\ye},\\
            \frac{\partial\qx}{\partial\vpx}&=\frac{\partial\qx}{\partial\ttp^*}\cdot\frac{\partial\ttp^*}{\partial\vpx}-R_1\cos\ttp^*\cdot\tpt(\ttp^*),\\
            \frac{\partial\qx}{\partial\vpy}&=\frac{\partial\qx}{\partial\ttp^*}\cdot\frac{\partial\ttp^*}{\partial\vpy}-R_2\cos\ttp^*\cdot\tpt(\ttp^*),
        \end{aligned}
    \label{qxpr}
    \end{equation}
    and for $\qy$ we have
    \begin{equation}
        \begin{aligned}
            \frac{\partial\qy}{\partial\xp}&=\frac{\partial\qy}{\partial\ttp^*}\cdot\frac{\partial\ttp^*}{\partial\xp},\quad\frac{\partial\qy}{\partial\yp}=\frac{\partial\qy}{\partial\ttp^*}\cdot\frac{\partial\ttp^*}{\partial\yp}+1,\\
            \frac{\partial\qy}{\partial\xe}&=\frac{\partial\qy}{\partial\ttp^*}\cdot\frac{\partial\ttp^*}{\partial\xe},\quad\frac{\partial\qy}{\partial\ye}=\frac{\partial\qy}{\partial\ttp^*}\cdot\frac{\partial\ttp^*}{\partial\ye}-1,\\
            \frac{\partial\qy}{\partial\vpx}&=\frac{\partial\qy}{\partial\ttp^*}\cdot\frac{\partial\ttp^*}{\partial\vpx}-R_1\sin\ttp^*\cdot\tpt(\ttp^*),\\
            \frac{\partial\qy}{\partial\vpy}&=\frac{\partial\qy}{\partial\ttp^*}\cdot\frac{\partial\ttp^*}{\partial\vpy}-R_2\sin\ttp^*\cdot\tpt(\ttp^*),
        \end{aligned}
    \label{qypr}
    \end{equation}
    where
    \begin{equation*}
        \begin{aligned}
            &R_1=\frac{-\vpx\sin^2\ttp^*+\vpy\sin\ttp^*\cos\ttp^*}{\sqrt{\vpb^2-(\vpx\sin\ttp^*-\vpy\cos\ttp^*)^2}}-\cos\ttp^*,\\&R_2=\frac{-\vpy\cos^2\ttp^*+\vpx\sin\ttp^*\cos\ttp^*}{\sqrt{\vpb^2-(\vpx\sin\ttp^*-\vpy\cos\ttp^*)^2}}-\sin\ttp^*.
        \end{aligned}
    \end{equation*}
    Notice that
    \begin{equation}\label{eq:R1R2relation}
        R_1\cos\ttp^*+R_2\sin\ttp^*=-1.
    \end{equation}
    Moreover, since $t_f$ and $\ttp^*$ are the optimal value and optimal solution of~\eqref{oppr}, respectively, we know that $\frac{\partial t}{\partial\ttp}=0$ in \eqref{tpm} at $\ttp=\ttp^*$ under strategies \eqref{opst2}, i.e.,
    \begin{multline}
            \frac{1}{\vpb^2-\veb^2}(\frac{h(\bx,\ttp^*)}{g(\bx,\ttp^*)}-1)(\px\frac{\partial\qx}{\partial\ttp^*}+\qx\frac{\partial\px}{\partial\ttp^*}+\py\frac{\partial\qy}{\partial\ttp^*}\\+\qy\frac{\partial\py}{\partial\ttp^*})-\frac{1}{g(\bx,\ttp^*)}(\qx\frac{\partial\qx}{\partial\ttp^*}+\qy\frac{\partial\qy}{\partial\ttp^*})=0.
    \label{part0}
    \end{multline}
    We next compute partial derivative of \eqref{tpm} with respect to $\xp$ using \eqref{pxpr}-\eqref{part0} and we obtain
    \begin{equation}
        \begin{aligned}
            &\frac{\partial V}{\partial\xp}\\
            =&\frac{1}{\vpb^2-\veb^2}(\frac{h(\bx,\ttp^*)}{g(\bx,\ttp^*)}-1)(\px(\frac{\partial\qx}{\partial\ttp^*}\cdot\frac{\partial\ttp^*}{\partial\xp}+1)\\
            &\qquad+\qx\frac{\partial\px}{\partial\ttp^*}\cdot\frac{\partial\ttp^*}{\partial\xp}+\py\frac{\partial\qy}{\partial\ttp^*}\cdot\frac{\partial\ttp^*}{\partial\xp}+\qy\frac{\partial\py}{\partial\ttp^*}\cdot\frac{\partial\ttp^*}{\partial\xp})\\&-\frac{1}{g(\bx,\ttp^*)}(\qx(\frac{\partial\qx}{\partial\ttp^*}\cdot\frac{\partial\ttp^*}{\partial\xp}+1)+\qy\frac{\partial\qy}{\partial\ttp^*}\cdot\frac{\partial\ttp^*}{\partial\xp})\\
            =&(\frac{1}{\vpb^2-\veb^2}(\frac{h(\bx,\ttp^*)}{g(\bx,\ttp^*)}-1)(\px\frac{\partial\qx}{\partial\ttp^*}+\qx\frac{\partial\px}{\partial\ttp^*}+\py\frac{\partial\qy}{\partial\ttp^*}\\
            &\qquad+\qy\frac{\partial\py}{\partial\ttp^*})-\frac{1}{g(\bx,\ttp^*)}(\qx\frac{\partial\qx}{\partial\ttp^*}+\qy\frac{\partial\qy}{\partial\ttp^*}))\frac{\partial\ttp^*}{\partial\xp}\\
            &+(\frac{h(\bx,\ttp^*)}{g(\bx,\ttp^*)}-1)\frac{\px}{\vpb^2-\veb^2}-\frac{\qx}{g(\bx,\ttp^*)}\\=
            &(\frac{h(\bx,\ttp^*)}{g(\bx,\ttp^*)}-1)\frac{\px}{\vpb^2-\veb^2}-\frac{\qx}{g(\bx,\ttp^*)}.
        \end{aligned}
        \label{v/px}
    \end{equation}
    Similarly, we have the following
    \begin{equation}
        \begin{aligned}
            \frac{\partial V}{\partial\yp}=&(\frac{h(\bx,\ttp^*)}{g(\bx,\ttp^*)}-1)\frac{\py}{\vpb^2-\veb^2}-\frac{\qy}{g(\bx,\ttp^*)},\\
            \frac{\partial V}{\partial\xe}=&-(\frac{h(\bx,\ttp^*)}{g(\bx,\ttp^*)}-1)\frac{\px}{\vpb^2-\veb^2}+\frac{\qx}{g(\bx,\ttp^*)},\\
            \frac{\partial V}{\partial\ye}=&-(\frac{h(\bx,\ttp^*)}{g(\bx,\ttp^*)}-1)\frac{\py}{\vpb^2-\veb^2}+\frac{\qy}{g(\bx,\ttp^*)},\\
            \frac{\partial V}{\partial\vpx}=&\frac{1}{\vpb^2-\veb^2}(\frac{h(\bx,\ttp^*)}{g(\bx,\ttp^*)}-1)\cdot(-\px R_1\cos\ttp^*\cdot\tpt(\ttp^*)\\
            &\qquad-\py R_1\sin\ttp^*\cdot\tpt(\ttp^*)+\qx(1+R_1\cos\ttp^*)\\&\qquad+\qy R_1\sin\ttp^*)\\&+\frac{\qx R_1\cos\ttp^*\cdot\tpt(\ttp^*)+\qy R_1\sin\ttp^*\cdot\tpt(\ttp^*)}{g(\bx,\ttp^*)},\\
            \frac{\partial V}{\partial\vpy}=&\frac{1}{\vpb^2-\veb^2}(\frac{h(\bx,\ttp^*)}{g(\bx,\ttp^*)}-1)\cdot(-\px R_2\cos\ttp^*\cdot\tpt(\ttp^*)\\&\qquad-\py R_2\sin\ttp^*\cdot\tpt(\ttp^*)+\qy(1+R_2\sin\ttp^*)\\&\qquad+\qx R_2\cos\ttp^*)\\&+\frac{\qx R_2\cos\ttp^*\cdot\tpt(\ttp^*)+\qy R_2\sin\ttp^*\cdot\tpt(\ttp^*)}{g(\bx,\ttp^*)}.
        \end{aligned}
    \label{vpar}
    \end{equation}
    Substituting \eqref{v/px} and \eqref{vpar} into the HJI equation~\eqref{hji}, we obtain
    \begin{equation}
        \begin{aligned}
            &\frac{\partial V}{\partial\xp}\vpx+\frac{\partial V}{\partial\yp}\vpy+\frac{\partial V}{\partial\xe}\ve^*\cos\tte^*+\frac{\partial V}{\partial\ye}\ve^*\sin\tte^*\\
            &+\frac{\partial V}{\partial\vpx}\ap^*\cos\ttp^*+\frac{\partial V}{\partial\vpy}\ap^*\sin\ttp^*+1\\
            =&\frac{1}{\vpb^2-\veb^2}(\frac{h(\bx,\ttp^*)}{g(\bx,\ttp^*)}-1)\\
            &\quad\cdot(\px(\vpx-\veb\cos\tte^*+\apb\cos\ttp^*\cdot\tpt(\ttp^*))\\
            &\qquad+\py(\vpy-\veb\sin\tte^*+\apb\sin\ttp^*\cdot\tpt(\ttp^*)))\\
            &-\frac{\qx(\vpx-\veb\cos\tte^*+\apb\cos\ttp^*\cdot\tpt(\ttp^*))}{g(\bx,\ttp^*)}\\
            &-\frac{\qy(\vpy-\veb\sin\tte^*+\apb\sin\ttp^*\cdot\tpt(\ttp^*))}{g(\bx,\ttp^*)}+1\\
            =&\frac{1}{\vpb^2-\veb^2}(\frac{h(\bx,\ttp^*)}{g(\bx,\ttp^*)}-1)(\vpb^2-\px\veb\cos\tte^*-\py\veb\sin\tte^*)\\
            &-\frac{h(\bx,\ttp^*)-\qx\veb\cos\tte^*-\qy\veb\sin\tte^*}{g(\bx,\ttp^*)}+1\\
            =&\frac{1}{\vpb^2-\veb^2}(\frac{h(\bx,\ttp^*)}{g(\bx,\ttp^*)}-1)(\veb^2-\px\veb\cos\tte^*-\py\veb\sin\tte^*)\\
            &+\frac{\qx\veb\cos\tte^*+\qy\veb\sin\tte^*}{g(\bx,\ttp^*)},
        \end{aligned}
    \label{yzhji}
    \end{equation}
    where we used~\eqref{eq:R1R2relation} in the first equality, and~\eqref{pq},~\eqref{eq:relationofp}~and~\eqref{gh} in the second equality. From \eqref{xff}, we have
    \begin{equation}
        \begin{aligned}
            x_f&=\xp+\vpx t_f+\frac{1}{2}\apb\cos\ttp^*\cdot(2t_f\tpt(\ttp^*)-\tpt^2(\ttp^*))\\
            &=\px t_f+\qx+\xe,\\
            y_f&=\yp+\vpy t_f+\frac{1}{2}\apb\sin\ttp^*\cdot(2t_f\tpt(\ttp^*)-\tpt^2(\ttp^*))\\
            &=\py t_f+\qy+\ye.
        \end{aligned}
    \label{xfff}
    \end{equation}
    Then substituting~\eqref{t2t}~and~\eqref{xfff} into \eqref{opst2}, we have
    \begin{equation}
        \veb\cos\tte^*=\px+\frac{\qx}{t_f},\quad\veb\sin\tte^*=\py+\frac{\qy}{t_f}.
    \label{huajian}
    \end{equation}
    Finally substituting \eqref{tpm} and \eqref{huajian} into \eqref{yzhji}, we have
    \begin{equation*}
        \begin{aligned}
            &\frac{1}{\vpb^2-\veb^2}(\frac{h(\bx,\ttp^*)}{g(\bx,\ttp^*)}-1)(\veb^2-\px\veb\cos\tte^*-\py\veb\sin\tte^*)\\
            &+\frac{\qx\veb\cos\tte^*+\qy\veb\sin\tte^*}{g(\bx,\ttp^*)}\\
            =&\frac{1}{\vpb^2-\veb^2}(\frac{h(\bx,\ttp^*)}{g(\bx,\ttp^*)}-1)(\veb^2-\frac{h(\bx,\ttp^*)}{t_f}-\vpb^2)\\
            &+\frac{\qx\veb\cos\tte^*+\qy\veb\sin\tte^*}{g(\bx,\ttp^*)}\\
            =&(\frac{h(\bx,\ttp^*)-g(\bx,\ttp^*)}{g(\bx,\ttp^*)})(-\frac{g(\bx,\ttp^*)}{g(\bx,\ttp^*)-h(\bx,\ttp^*)})\\
            &+\frac{\qx\veb\cos\tte^*+\qy\veb\sin\tte^*}{g(\bx,\ttp^*)}\\
            =&1+\frac{\qx\veb\cos\tte^*+\qy\veb\sin\tte^*}{g(\bx,\ttp^*)}\\
            =&\frac{1}{g(\bx,\ttp^*)}(g(\bx,\ttp^*)+h(\bx,\ttp^*)+\frac{\qx^2+\qy^2}{t_f})\\
            =&\frac{1}{g(\bx,\ttp^*)}(g(\bx,\ttp^*)+h(\bx,\ttp^*)+\frac{(\vpb^2-\veb^2)(\qx^2+\qy^2)}{g(\bx,\ttp^*)-h(\bx,\ttp^*)})\\
            =&0,
        \end{aligned}
    \end{equation*}
    where we used \eqref{eq:relationofp} in the first equality, and~\eqref{tpm} in the second and penultimate equality. Thus the value function $V=t_f$ satisfies HJI equation~\eqref{hji}, which means the strategies in \eqref{opst2} are optimal in the sense of Nash equilibrium.

    Lastly, we demonstrate the continuity of the value function~\eqref{valuef} when the strategies switch. The boundary between the two strategies is when $t_f=\tpt(\ttp^*)$, i.e., the capture occurs precisely when $P$ reaches its maximum speed. We aim to show that applying the strategies in \eqref{opst2} yields a capture time $t_f$ and acceleration direction $\ttp^*$ such that $t_f = \tpt(\ttp^*)$ if and only if applying the strategies in \eqref{ops1} results in the same capture time $t_f$ and acceleration direction $\ttp^*$, thereby also satisfying $t_f = \tpt(\ttp^*)$, which means when the solution $t_f=\tpt(\ttp^*)$ satisfies~\eqref{t2t}, then~\eqref{t2t} is equivalent to~\eqref{queq}. We substitute $t_f=\tpt(\ttp^*)$ into \eqref{t2t} and obtain
    \begin{equation}
        \begin{aligned}
            &(\vpb^2-\veb^2)\tpt^2(\ttp^*)+2(\px\qx+\py\qy)\tpt(\ttp^*)+\qx^2+\qy^2=0\\
            \Leftrightarrow&(\vpb^2-\veb^2)\tpt^2(\ttp^*)+2(\px\qx+\py\qy)\tpt(\ttp^*)\\&+\qx^2+\qy^2+\apb\tpt^2(\ttp^*)(\veb\tpt(\ttp^*)\\&\qquad-\veb\cos^2\ttp^*\cdot\tpt(\ttp^*)-\veb\sin^2\ttp^*\cdot\tpt(\ttp^*))=0\\
            \Leftrightarrow&(\qx+\px\tpt(\ttp^*))^2+(\qy+\py\tpt(\ttp^*))^2-\veb^2\tpt^2(\ttp^*)\\&+\apb\tpt^2(\ttp^*)(\veb\tpt(\ttp^*)-(\qx+\px\tpt(\ttp^*))\cos\ttp^*\\&\qquad-(\qy+\py\tpt(\ttp^*))\sin\ttp^*)=0\\
            \lra&(\qx+\px\tpt(\ttp^*))^2-\apb\tpt^2(\ttp^*)\cos\ttp^*\cdot(\qx+\px\tpt(\ttp^*))\\&+(\qy+\py\tpt(\ttp^*))^2-\apb\tpt^2(\ttp^*)\sin\ttp^*\cdot(\qy+\py\tpt(\ttp^*))\\&+\apb\veb\tpt^3(\ttp^*)-\veb^2\tpt^2(\ttp^*)=0\\
            \lra&(\qx+\px\tpt(\ttp^*))^2-\apb\tpt^2(\ttp^*)\cos\ttp^*\cdot(\qx+\px\tpt(\ttp^*))\\&+\frac{1}{4}\apb^2\tpt^4(\ttp^*)\cos^2\ttp^*+(\qy+\py\tpt(\ttp^*))^2\\&-\apb\tpt^2(\ttp^*)\sin\ttp^*\cdot(\qy+\py\tpt(\ttp^*))+\frac{1}{4}\apb^2\tpt^4(\ttp^*)\sin^2\ttp^*\\&-\frac{1}{4}\apb^2\tpt^4(\ttp^*)+\apb\veb\tpt^3(\ttp^*)-\veb^2\tpt^2(\ttp^*)=0\\
            \Leftrightarrow&(\qx+\px\tpt(\ttp^*)-\frac{1}{2}\apb\cos\ttp^*\cdot\tpt^2(\ttp^*))^2+(\qy+\py\tpt(\ttp^*)\\&-\frac{1}{2}\apb\sin\ttp^*\cdot\tpt^2(\ttp^*))^2-(\frac{1}{2}\apb\tpt^2(\ttp^*)-\veb\tpt(\ttp^*))^2=0\\
            \Leftrightarrow&(\xp-\xe+\vpx\tpt(\ttp^*))^2+(\yp-\ye+\vpy\tpt(\ttp^*))^2\\&-(\frac{1}{2}\apb\tpt^2(\ttp^*)-\veb\tpt(\ttp^*))^2=0,
        \end{aligned}
    \end{equation}
    where we used~\eqref{pq}, \eqref{eq:relationofp}, \eqref{huajian}, as well as the property that $\ttp^*$ and $\tte^*$ are equal when $t_f = \tpt(\ttp^*)$. Thus we arrive at~\eqref{queq} with $t_f=\tpt(\ttp^*)$, which means the value function~\eqref{valuef} is continuous when the strategies change.

\section*{References}

\bibliographystyle{unsrt}
\bibliography{refs}

\end{document}